\newcommand{\Mc}[1]{\mathcal{#1}}
\newcommand{\agt}{\mathcal{V}} 
\newcommand{\agte}{\mathcal{V}_{\mathrm{E}}} 
\newcommand{\agtde}{\mathcal{V}_{\mathrm{DE}}}
\newcommand{\tp}{\mathcal{T}} 
\newcommand{\zij}{z_{i}^{j}} 
\newcommand{\zkj}{z_{k}^{j}}
\newcommand{\zi}{z_{i}}
\newcommand{\znoti}{\mathbf{z}_{-i}}
\newcommand{\aik}{a_{ik}}
\newcommand{\aii}{a_{ii}}
\newcommand{\aij}{a_{ij}}
\newcommand{\Bi}{B_{i}}
\newcommand{\boldzi}{\mathbf{z}_{i}} 
\newcommand{\boldzk}{\mathbf{z}_{k}}
\newcommand{\boldzj}{\mathbf{z}_{j}}
\newcommand{\boldzl}{\mathbf{z}_{l}}
\newcommand{\boldz}{\mathbf{z}}
\newcommand{\boldx}{\mathbf{x}}
\newcommand{\boldxi}{\mathbf{x}_{i}}
\newcommand{\boldy}{\mathbf{y}}
\newcommand{\bolds}{\mathbf{s}}
\newcommand{\boldznoti}{\mathbf{z}_{-i}}
\newcommand{\boldznotj}{\mathbf{z}_{-j}}
\newcommand{\boldznotk}{\mathbf{z}_{-k}}
\newcommand{\boldg}{\mathbf{g}}
\newcommand{\boldp}{\mathbf{p}}
\newcommand{\boldv}{\mathbf{v}}
\newcommand{\boldq}{\mathbf{q}}
\newcommand{\boldpi}{\mathbf{p}_{i}}
\newcommand{\boldci}{\mathbf{c}_{i}}
\newcommand{\boldDi}{\mathbf{D}_{i}}
\newcommand{\boldfi}{\mathbf{f}_{i}}
\newcommand{\boldf}{\mathbf{f}}
\newcommand{\boldD}{\mathbf{D}}
\newcommand{\boldJ}{\mathbf{J}}
\newcommand{\tildeDi}{\widetilde{\mathbf{D}}_{i}}
\newcommand{\tildepi}{\widetilde{\mathbf{p}}_{i}}
\newcommand{\tildepk}{\widetilde{\mathbf{p}}_{k}}
\renewcommand{\tilde}[1]{\widetilde{#1}}
\newcommand{\identitym}{\mathbf{I}_{m}}
\newcommand{\identitymn}{\mathbf{I}_{mn}}
\newcommand{\support}[1]{\mathsf{supp}(#1)}
\newcommand{\innerproduct}{\langle \: \boldf(\boldz),\boldx-\boldz\ \rangle}
\newcommand{\innerproducti}{\langle \: \boldfi(\boldzi,\boldznoti),\boldxi-\boldzi\ \rangle}
\newcommand{\innerproductistar}{\langle \: \boldfi(\boldzi^*,\boldznoti^*),\boldxi-\boldzi^*\ \rangle}
\newcommand{\dij}{p_{i}^{j}}
\newcommand{\wi}{w_{i}}
\newcommand{\wij}{w_{i}^{j}}
\newcommand{\ci}{\mathbf{c}_{i}}
\newcommand{\laplacian}{\mathbf{L}}
\newcommand{\constraintseti}{\mathcal{K}_{i}}
\newcommand{\constraintsetnoti}{\mathcal{K}_{-i}}
\newcommand{\constraintsetl}{\mathcal{K}_{l}}
\newcommand{\constraintset}{\mathcal{K}}
\newcommand{\adjmat}{\mathbf{A}}
\newcommand{\Ni}{\mathcal{N}_{i}}
\newcommand{\Nie}{\mathcal{N}_{i}^{e}}
\newcommand{\Nif}{\mathcal{N}_{i}^{f}}
\newcommand{\agtsumi}{\sum_{i \in \agt}}
\newcommand{\agtsumk}{\sum_{k \in \agt}}
\newcommand{\tpsumj}{\sum_{j \in \tp}}
\newcommand{\enemysumi}{\sum_{k \in\mathcal{N}_{i}^{e}}}
\newcommand{\agtsumknoti}{\sum_{k \in \agt \setminus \left\{i\right\}}}
\newcommand{\pdsi}{\mathrm{PDS}(\boldfi,\constraintseti,\boldznoti)}
\newcommand{\pds}{\mathrm{PDS}(\boldf,\constraintset)}
\newcommand{\varineqfi}{\mathrm{VI}(\boldfi,\constraintseti,\boldznoti)}
\newcommand{\solvarineqfi}{\mathrm{SOL}(\boldfi,\constraintseti,\boldznoti)}
\newcommand{\solvarineqfistar}{\mathrm{SOL}(\boldfi,\constraintseti,\boldznoti^*)}
\newcommand{\varineqf}{\mathrm{VI}(\boldf,\constraintset)}
\newcommand{\solvarineqf}{\mathrm{SOL}(\boldf,\constraintset)}
\newcommand{\solvarineq}{\mathrm{SOL}(\mathbf{g},\Mc{X})}
\newcommand{\tki}{T_{\constraintseti}(\boldzi)}
\newcommand{\tk}{T_{\constraintset}(\boldz)}
\newcommand{\projtki}{\mathcal{P}_{\tki}}
\newcommand{\projtk}{\mathcal{P}_{\tk}}
\newcommand{\realn}{\mathbb{R}^{n}} 
\newcommand{\realm}{\mathbb{R}^{m}} 
\newcommand{\realmn}{\mathbb{R}^{mn}} 
\newcommand{\real}{\mathbb{R}}
\newcommand{\boldzero}{\mathbf{0}}
\newcommand{\edg}{\mathcal{L}}
\newcommand{\ldef}{:=}
\newcommand{\rdef}{=:}
\DeclareMathOperator*{\argmax}{argmax}
\DeclareMathOperator*{\argmin}{argmin}
\newcommand{\thmtitle}[1]{\mbox{}\textit{(\textbf{#1}.)}}
\newcommand{\remend}{\relax\ifmmode\else\unskip\hfill\fi\hbox{$\bullet$}}
\newcommand{\bulletsym}{\hbox{$\bullet$}}
\newcommand{\bulletend}{\relax\ifmmode\else\unskip\hfill\fi\bulletsym}
\newcommand{\squaresym}{\hbox{$\blacksquare$}}
\newcommand{\proofend}{\relax\ifmmode\else\unskip\hfill\fi\squaresym}
\newcommand{\trianglesym}{\hbox{$\blacktriangle$}}
\newcommand{\egend}{\relax\ifmmode\else\unskip\hfill\fi\trianglesym}
\def \eqpt{\mathcal{E}}
\def \grph{\mathsf{G}}
\def \ne{\mathcal{NE}}
\newcommand{\tth}{^{\text{th}}}
\renewenvironment{proof}{\textit{Proof:} }{\proofend}
\newtheorem{theorem}{\textbf{Theorem}}[section]
\newtheorem{corollary}[theorem]{\textbf{Corollary}}
\newtheorem{lemma}[theorem]{\textbf{Lemma}}
\newtheorem{remark}[theorem]{\textbf{Remark}}
\newtheorem{define}[theorem]{\textbf{Definition}}
\title{\LARGE \bf Multi-Topic Projected Opinion Dynamics for Resource Allocation}
\author{Prashil Wankhede$^{1}$, Nirabhra Mandal$^{2}$, Sonia Mart\'{i}nez$^{2}$ and Pavankumar Tallapragada$^{1}$
  \thanks{ This work was partially supported by Science and Engineering Research Board under grant CRG/2023/008573.}  \thanks{$^{1}$Prashil Wankhede and Pavankumar Tallapragada are with the Indian Institute of Science. \{\tt\small \{pavant, prashilw\}@iisc.ac.in\}} \thanks{$^{2}$Nirabhra Mandal and Sonia Mart\'{i}nez are with the Department of Mechanical and Aerospace Engineering, University of California, San Diego.  \{\tt\small \{nmandal, soniamd\}@ucsd.edu\}} 
}
\begin{document}
\maketitle
\thispagestyle{empty} \pagestyle{empty}
\begin{abstract}
  We propose a model of opinion formation on resource allocation among multiple topics by multiple agents, who are subject to hard budget constraints. We define a utility function for each agent and then derive a projected dynamical system model of opinion evolution assuming that each agent myopically seeks to maximize its utility subject to its constraints. Inter-agent coupling arises from an undirected social network, while inter-topic coupling arises from resource constraints. We show that opinions always converge to the equilibrium set. For special networks with \emph{very weak antagonistic relations}, the opinions converge to a unique equilibrium point. We further show that the underlying opinion formation game is a potential game. We relate the equilibria of the dynamics and the Nash equilibria of the game and characterize the unique Nash equilibrium for networks with no antagonistic relations. Finally, simulations illustrate our findings.
\end{abstract}
\begin{keywords}
  Opinion dynamics, Projected dynamical systems, Utility maximization, Game theory, Multi-agent systems.
\end{keywords}

\section{Introduction}
\label{sec:intro}

Multi-agent modeling and study of opinion dynamics finds widespread applications in sociology, economics, and other fields. Many of the complex decision making problems that individuals, organizations, corporations and governments engage in often boil down to problems of allocating limited resources (such as time or money) among multiple options. In such scenarios, there is a strong interplay between agents' self-interest, their heterogeneous resource constraints and social influences on the formation of opinions about resource allocation. To the best of our knowledge, currently there are no models and studies that explore such complex mechanisms behind opinion formation. Thus, in this paper, we seek to model and study such important phenomena.

\subsubsection*{Literature review}

Classical models, including the French-DeGroot (FD), the Abelson, the Taylor, the Friedkin-Johnsen (FJ), the Hegselmann-Krause (HK), the Altafini, and the Deffuant-Weisbuch (DW) models form the foundation of opinion dynamics research~\cite{2017_AP-RT_ARC,2018_AP-RT_ARC 
  ,2020_QZ_GK_HZ_HL_XC_CCL_YD}
and have inspired numerous extensions and refinements. Many of these models studied single-topic discussions with scalar opinions. Extensions to vector-valued opinions (multi-topic scenarios) have also been explored~\cite{2012_AN_BT,2017_SP_AP_RT_NF, 2020_MY_MT_YL_BA, 2022_GH_ZC_XW_MH,2023_GH_ZS_TH_WZ_XW,2020_HA_QT_MT_MY_JL_KM,2024_MR_CA}. 
Some notable works adopt a game-theoretic and utility-maximization approach to model opinion dynamics~\cite{2009_JM_GA_JS,2015_Bindel_et_al,2014_PG_JL_FS,2015_SRE-TB,2024_HJ_AY}. 
	
In many applications, opinions are naturally or forced to be constrained  within a bounded space. Models for such scenarios also exist. For instance,~\cite{2022_NV_DG} models opinions as unit vectors with varying orientations. References
~\cite{2020_MY_MT_YL_BA, 2017_SP_AP_RT_NF,2021_PCV_KC_FB} restrict opinions to a real interval, which is particularly useful when opinions represent an agent's sentiments. Another work~\cite{2015_MC_AC_BP} proposes a model where opinions are confined on the surface of a sphere.
	
The impact of limited or constrained resources on decision-making has been explored across various disciplines, including psychology, economics, and transportation. For a comprehensive overview, see~\cite{2019_RH_CM_AS_DT_VG,2015_CH_CM} and the references therein. Notably,~\cite{2019_RH_CM_AS_DT_VG} examines how financial constraints influence consumer behavior from four distinct angles: \emph{resource scarcity}, \emph{choice restriction}, \emph{social comparison}, and \emph{environmental uncertainty}. Motivated to study such effects, our recent works~\cite{2023_PW_NM_PT, 2025_PW_NM_SM_PT} introduced a nonlinear opinion dynamics model with a resource penalty in the utility function, enforcing soft constraints to keep opinions bounded. The applicability of the FJ model to resource allocation problems is examined in~\cite{2019_NF_AP_WM_FB}.
A closely related work is~\cite{2007_JL}, which introduces models (based on the DW and HK models) where opinions are confined to a probability simplex. Here, the opinions represent each agent's proposal for allocating a fixed global budget across multiple projects.
	
\subsubsection*{Statement of contributions}

\begin{enumerate}[leftmargin=0pt, labelsep=10pt, itemindent=18pt]

\item To the best of our knowledge, this is the first work to model and study opinion dynamics for general resource allocation scenarios. Our model captures heterogeneous goals and hard resource constraints of the agents. Moreover, we adopt a projected dynamical systems framework to model opinion dynamics under constraints on opinions, which is also a first in the literature.

\item We employ a game-theoretic framework, where the proposed projected dynamics are derived from the \emph{unconstrained} dynamics which in turn are derived from the agent's utility function, assuming that each agent myopically seeks to maximize its utility.

\item Unlike our prior works~\cite{2023_PW_NM_PT, 2025_PW_NM_SM_PT}, which focused on a single-topic model with soft resource penalty, this paper
 considers multiple topics and enforces hard resource constraints. Our model allows for agents with heterogeneous budgets.
\end{enumerate}
	
	\subsubsection*{Notation}
	The sets of natural numbers, real numbers, non-negative real numbers and positive real numbers are denoted by $\mathbb{N},\real$, $\real_{\geq0}$ and $\real_{>0}$, respectively.
	Given $\boldx, \boldy \in \real^{n}$ and a diagonal matrix 
	$\mathbf{D} \in \real_{\geq 0}^{n\times n}$; 
	$\|\boldx\| \ldef \sqrt{\boldx^\top \boldx}$, 
	$\|\boldx\|_{\boldD}:= \sqrt{\boldx^\top \boldD \boldx}$ and $\langle \boldx,\boldy \rangle \ldef \boldx^\top \boldy$. 
	For $\boldx \in \mathbb{R}^p$, $\support{\boldx}$ denotes its support, and $\mathrm{diag}(x_{1},\ldots,x_p)$ denotes a diagonal matrix with $\boldx$ as its main diagonal. Similarly, for $\boldD_1, \dots, \boldD_n\in \real^{p \times p}$,
	$\mathrm{blkdiag}(\boldD_1, \dots, \boldD_n)$ denotes a block diagonal matrix with $\boldD_i$ as its diagonal blocks. For real matrices $\mathbf{A}$, $\mathbf{B}$; their Kronecker product is $\mathbf{A} \otimes \mathbf{B}$ and $\|\mathbf{A}\|$ denotes the induced matrix norm. 
	Let $\boldzero$ denote a vector (of appropriate dimension) with all \emph{zero} elements. 
	For any $a \in \real$, $|a|$ denotes its absolute value. 
	For any $\boldx_{i} \in \realm$, $x_i^j$ denotes its $j\tth$ element. 

	For $\Mc{C}\subset \realn$, $\overline{\Mc{C}}$ is its closure.
	The empty set is denoted by $\varnothing$	.
	The projection of $\boldx \in \realn$ onto a closed, convex and non-empty set $\Mc{S}\subset \realn$ is denoted by $\Mc{P}_{\Mc{S}}(\boldx):=\argmin_{\boldy \in \Mc{S}} \|\boldy - \boldx\|$. The tangent cone to a set $\Mc{X} \subset \realn$ at a point $\boldx \in \Mc{X}$, $T_{\Mc{X}}(\boldx) \ldef \{ \boldv \in \realn \,|\,$ $\exists \{\boldx_{i}\}_{i \in \mathbb{N}} \in \Mc{X}, \boldx_i \to \boldx$ and $ \exists \{t_{i}\}_{i \in \mathbb{N}}, t_i >0, t_i \to 0, $ such that $\boldv = \lim_{i \to \infty}(\boldxi - \boldx)/t_i \}$.

	\section{Preliminaries}
	\label{sec:preliminaries}	

A Projected Dynamical System (PDS) is a special type of discontinuous dynamical system that restricts the state to a constraint set. In this work, we limit the definition of a PDS to closed, convex constraint sets and Euclidean projections. For a given closed, convex, non-empty set $\Mc{X} \subset \real^p$ and a vector field $\boldg: \real^p \to \real^p$, we define the PDS as
	\begin{equation*}
		\dot{\boldx} = \Pi_{\Mc{X}}(\boldx,-\boldg(\boldx)) \ldef \lim_{\delta \to 0^+} \frac{ \Mc{P}_{\Mc{X}}(\boldx - \delta \boldg(\boldx))-\boldx}{\delta} \,,
	\end{equation*}
	which is equivalent to the following dynamics~\cite{2020_BB_AT},
	\begin{equation}
		\dot{\boldx} = \Mc{P}_{T_{\Mc{X}}(\boldx)}(-\boldg(\boldx)) \,.
		\label{eq:pds_def}
	\end{equation}
	In this paper, we use this representation of the PDS. Since the dynamics~\eqref{eq:pds_def} is discontinuous, it is essential to first define an appropriate notion of a solution before proceeding with the analysis. Here, we consider solutions of~\eqref{eq:pds_def} in the \emph{Carath\'{e}odory} sense~\cite{2008_JC} \emph{i.e.,} 
	absolutely continuous curves that satisfy~\eqref{eq:pds_def} almost everywhere. 
	It is well known that the equilibrium points of \eqref{eq:pds_def} are related to solutions of the following variational inequality (VI) problem~\cite{1996_AN_DZ_book}.
		\begin{define}{(\textbf{\emph{Variational Inequality Problem}}~\cite{1996_AN_DZ_book})}
                  For a closed, convex set $\Mc{X} \subseteq \real^p$ and a function $\boldg:\Mc{X} \to \real^p$, the finite dimensional variational inequality problem, $\mathrm{VI}(\boldg,\Mc{X})$, is to find $\boldx^* \in \Mc{X}$ such that $ \langle \mathbf{g}(\boldx^*), \boldx - \boldx^* \rangle \geq 0, \quad\forall \boldx \in \Mc{X}.$ We refer to the solution set for $\mathrm{VI}(\boldg,\Mc{X})$ by $\solvarineq$. \bulletend
	\end{define}
	
Interestingly, the VI problem is closely connected to several fundamental mathematical frameworks, including nonlinear equations, optimization, complementarity problems, and fixed-point problems~\cite{1996_AN_DZ_book}.
\section{Modeling and Problem Setup}
\label{sec:modeling_and_problem_setup}
Consider a set $\agt:=\{1,\ldots,n\}$ of $n$ agents that form opinions about allocation of their limited resources on $m$ topics indexed by the set $\tp:=\{1,\ldots,m\}$.  We aim to study the evolution of these opinions as the agents interact with each other over a social network. 
We start by defining a utility function for each agent that depends on their preferences and opinions on all topics and the social influence of their neighboring agents. Since the agents' opinions are about allocation of limited resources among the topics in $\tp$, they constrain their opinions so that a budget constraint is always satisfied. We first derive the \emph{unconstrained} opinion dynamics from the utility functions, assuming that each agent myopically seeks to maximize its utility. Then, using the unconstrained vector field, we define a projected dynamical system that keeps the opinions of all agents restricted by their budget constraints.
	
We denote the opinion of agent $i \in \agt$ on topic $j \in \tp$ at time $t$ as $\zij(t) \in \real$. We omit the time argument wherever there is no confusion. We let $\boldzi \ldef [\zi^{1},\ldots,\zi^{m}]^\top \in \realm$ denote the stacked opinions of agent $i$ on all topics. Similarly, $\boldz \ldef [\boldz_{1}^\top,\ldots,\boldz_{n}^\top]^\top \in \realmn$ is the stacked vector of opinions of all agents. The utility of agent $i \in \agt$ is
\begin{align}
  U_{i}(\boldz) & \ldef  - \frac{1}{2} \tpsumj \wij \big(\zij - \dij
    \big)^{2} - \frac{1}{2} \tpsumj \agtsumk \aik \big(\zkj - \zij
    \big)^{2} \nonumber \\
  &= -\frac{1}{2} || \boldzi-\boldpi ||_{\boldDi}^2
  - \agtsumk \frac{\aik}{2} || \boldzk-\boldzi ||^2.
                  \label{eq:utility}
\end{align}
Here $\dij \in \real$ is the internal \emph{preference} of agent $i \in \agt$ on topic $j \in \tp$, while $\boldpi \in \realm$ is the vector containing preferences $\dij$'s of agent $i$ on all topics. The \emph{topic preference} weight $\wij \in \real_{> 0}$ is the importance that agent $i \in \agt$ attaches to its internal preference on topic $j \in \tp$ and $\boldDi \ldef \mathrm{diag}(\wi^{1},\ldots,\wi^{m}) \in \real^{m \times m}$.
Finally, $a_{ik}$ is the \emph{influence weight} of agent $k$'s opinions on those of agent $i$. Let $\adjmat$ be the matrix with its $(i, k)^{\text{th}}$ element as $a_{ik}$. The inter-agent influence weights implicitly define a weighted social graph $\grph\ldef \big(\agt, \edg, \adjmat \big)$, where $\agt$ (the set of agents) is the set of nodes, $\edg \subseteq \agt \times \agt$ is the set of undirected links and $\adjmat$ is the weighted adjacency matrix. Note that $\aik \neq 0$ if and only if $\{i,k\} \in \edg$. Notice that the self loop weights $a_{ii}$, for any $i \in \agt$, do not affect the utility of the agents. We may assume them to be zero without loss of generality. The self influence of the agents is captured by the preference term in the utility function. In this paper, an agent's opinions are restricted to a constraint set which we define next. The opinion constraint set for each agent $i \in \agt$ is given by
\begin{equation}
  \label{eq:constraint_set}
  \constraintseti \ldef \{\boldzi \in \realm \:\mid\: \ci^\top \boldzi \leq \Bi,\:\boldzi \geq \boldzero \},
\end{equation}
where $\ci \in \realm_{>0}$ is a weighing parameter against $i$'s budget $B_i$. Note that $\constraintseti$ is a closed, bounded and convex polytope. 

We assume the following throughout the paper.  
\begin{enumerate}[label=\textbf{(SA\arabic*)},wide=\parindent]
\item \thmtitle{Undirected social network}
  \label{asmp:undirected_graph}
  The graph $\mathsf{G}=\big(\agt, \edg, \adjmat \big)$ is undirected, \emph{i.e.,} $\adjmat=\adjmat^\top$ and connected\footnote{Graph connectedness is without loss of generality, as each connected component can be analyzed separately in isolation.}. \remend

\item \thmtitle{Model parameters' signs} $\dij \in \real_{\geq 0}$, $\wij \in \real_{> 0}$, $\aik \in \real$, $\forall k \in \agt$, $B_{i}>0$ and $\ci > \boldzero$, $\forall i \in \agt$, $j \in \tp$. \label{asmp:param_sign} \remend
\end{enumerate}

We discuss the justification for \ref{asmp:param_sign} in the sequel. However, note here that the entries of $\adjmat$ are allowed to have negative values. The sign of $\aik$ denotes the type of influence relationship and its magnitude denotes the degree of influence. Using this idea, we define the following.
	\begin{define} \label{def:neighbor_set}
		\thmtitle{Neighbor sets of an agent}
		For each agent $i \in \agt$, the set $\Nie \ldef \{k \in 
		\agt\setminus\{i\} \:|\: \aik<0 \}$ denotes the set of its 
		\emph{enemies} and the set $\Nif \ldef \{k \in \agt\setminus\{i\} 
		\:|\: 
		\aik > 0\}$ denotes the set of its \emph{friends}. Further, $\Ni \ldef \Nie \cup \Nif$	 
		denotes the set of \emph{neighbors} of agent $i \in \agt$.
		\bulletend
	\end{define}
Some of our results in this paper are applicable to certain specific classes of inter-agent interactions. We list these specific classes in the following assumptions.
	
	\begin{enumerate}[label=\textbf{(A\arabic*)},wide=\parindent] 
		\item\thmtitle{No antagonistic relations} For each agent $i \in \agt$, $\Nie=\varnothing$.
		\label{asmp:no_antagonistic_relations}
		\remend
	\end{enumerate}	
	
	\begin{enumerate}[resume, label=\textbf{(A\arabic*)},wide=\parindent] 
		\item \thmtitle{Very weak antagonistic relations} For each agent $i \in \agt$,
		$ \min_{j \in \tp}\;\wij > 2 \enemysumi |\aik|$. \bulletend \label{asmp:WAR_WSI}
	\end{enumerate} 
	
	\begin{enumerate}[resume,label=\textbf{(A\arabic*)},wide=\parindent] 
		\item\thmtitle{Weak antagonistic relations}
		For each agent $i \in \agt$,
		$ \big(\min_{j \in \tp} \:\wij + \agtsumknoti \aik \big) \geq 0.$ \bulletend
		\label{asmp:WAR}
              \end{enumerate}
	
\begin{remark}\thmtitle{On the assumptions about the inter-agent relations}
 Under Assumption~\ref{asmp:WAR_WSI} the aggregate influence of agent $i$'s enemies is ``very weak" compared to its preference weights. Assumption~\ref{asmp:WAR} means that the aggregate influence weights of agent $i$'s enemies is ``weak" and insufficient to counter its preference weights and the influence of its friendly neighbors.
 Note that as a consequence of Standing Assumption~\ref{asmp:param_sign}, \ref{asmp:no_antagonistic_relations} $\implies$ \ref{asmp:WAR_WSI} $\implies$ \ref{asmp:WAR}.  \bulletend
\end{remark}
	
\begin{remark}
\thmtitle{Opinions, Budget and Utility}\label{rem:example}
In this work, we propose an opinion dynamics model for resource allocation scenarios under hard budget constraints. The resources could represent such things as time, money, or computational resources. As a motivating example, consider the scenario of agents budgeting their limited money for expenses on various needs and wants, such as food, housing, means of transport and digital devices to buy. In our model, $\zij \in \real_{\geq 0}$ is agent $i$'s opinion of how much to allocate to topic or option $j$; whereas the preference $\dij \geq 0$ represents agent $i$'s internal preference for allocation on topic $j$. The weight $\wij$ represents the importance that the agent gives or the confidence the agent has in its internal preference. Either because the allocation problem is a complex one or due to other reasons, an agent might mimic the opinions of its social neighbors. The weight $\aik$ quantifies agent $i$'s level of trust or distrust in agent $k$'s opinion about any topic $j$. The cost per unit allocation by agent $i$ for option $j$ is captured by $c_i^j$. Consequently, $c_i^j \zij$ denotes the portion of agent $i$'s budget allocated to option $j$, and $\boldci^\top\boldzi$ represents the total allocation, which is constrained by its budget $B_i$.
\remend
\end{remark}
Remark~\ref{rem:example} justifies our standing assumption~\ref{asmp:param_sign} on all model parameters stated earlier.
In the following, we first define the \emph{unconstrained} multi-topic opinion dynamics and then use it to propose the projected opinion dynamics model.

\subsubsection*{\textbf{Unconstrained Opinion Dynamics}}

Suppose that there are no budget constraints and opinions $\zij$'s are allowed to take any real values. We assume that at each infinitesimal time instant, each agent $i \in \agt$ revises its opinion on topic $j \in \tp$ by ascending along the gradient of its utility $U_{i}$, given in~\eqref{eq:utility}, with respect to its own opinion $\zij$, while assuming that others do not change their opinions. Thus, for each $i \in \agt$,
	\begin{align}
		\label{eq:uncontrained_agent_dynamics}
		\hspace{-0.15cm}\dot{\boldz}_i = -\boldDi(\boldzi-\boldpi) + \agtsumk \aik (\boldzk-\boldzi) \rdef -\boldfi(\boldz_{i}, \boldz_{-i}).
	\end{align}	
The unconstrained dynamics of all agents can be written as	
	\begin{align}
		\label{eq:uncontrained_dynamics}
		\dot{\boldz} = -\boldD(\boldz - \boldp) - (\laplacian \otimes \identitym)\boldz \rdef -\boldf(\boldz),
	\end{align}
	where $\boldznoti \in \real^{(n-1)m}$ denotes the opinions of all agents except $i$, $\boldD \ldef \mathrm{blkdiag}(\boldD_{1},\ldots,\boldD_{n}) \in \real^{mn \times mn}$, $\boldp := [\boldp_{1}^\top,\ldots,\boldp_{n}^\top]^\top \in \realmn$ and $\laplacian$ is a symmetric matrix\footnote{In the general case, $\laplacian$ is referred to as the repelling signed Laplacian matrix~\cite{2025_SZ_YT_WC}. 
	If $\aik \geq 0,\forall i,k \in \agt$ then $\laplacian$ denotes the usual (positive semi-definite) graph Laplacian matrix.} with elements $l_{ii}=\agtsumknoti\aik$ and $l_{ij} = -\aij,\: \forall j\neq i$. We denote the Jacobian of the vector field $\boldf(\boldz)$ in~\eqref{eq:uncontrained_dynamics} by,
	\begin{align}
			\boldJ:=(\boldD+\laplacian\otimes\identitym).
			\label{eq:jacobian_mat}
		\end{align}

	The terms in the unconstrained dynamics~\eqref{eq:uncontrained_agent_dynamics} for each topic is similar to Taylor's model~\cite{1968_Taylor} of opinion dynamics when antagonistic relations are absent in the network (\emph{i.e.,} when $\aik\geq 0$). In the presence of antagonistic relations, the second term in~\eqref{eq:uncontrained_agent_dynamics} models the \emph{boomerang effect}~\cite{1967_RA_JM_boomerang_effect}, where agents respond to neighbors' influence by shifting their opinions in the opposite direction, leading to disagreement.

	Since the unconstrained dynamics in~\eqref{eq:uncontrained_agent_dynamics} is obtained by performing gradient ascent on each agent~$i$’s utility with respect to its opinion~$\boldzi$, it can be interpreted in continuous time as the instantaneous better-response dynamics of agent~$i$ to its neighbors' opinions. However, because each agent’s opinions are restricted by individual resource constraints given in~\eqref{eq:constraint_set}, the unconstrained dynamics~\eqref{eq:uncontrained_dynamics} cannot, in general, ensure feasibility of the opinions at all times. So, we propose a projected version of the opinion update rule.

\subsubsection*{\textbf{Projected Opinion Dynamics}}
In this work, our aim is to study the effect of limited resources of the agents on their opinion evolution. Hence, we propose a model where the opinions of each agent satisfy their resource constraints~\eqref{eq:constraint_set} at all times. More precisely, the projected dynamical system
	\begin{equation}
		\label{eq:projected_dynamics_agent}
		\dot{\boldz}_i = \projtki(-\boldfi(\boldz_{i}, \boldz_{-i})) \,,
	\end{equation}	
	uses the unconstrained gradient ascent direction in~\eqref{eq:uncontrained_agent_dynamics} and restricts each agent's opinion evolution within the feasible set perpetually. Again for the
	sake of convenience, we let $\znoti \in \constraintsetnoti:=\bigtimes_{l=1,l\neq i}^{n}\constraintset_{l}$ denote the opinions of all agents except $i$.	 
	The tangent cone $\tki$ to the constraint set $\constraintseti$ at $\boldzi$ is closed and convex because the constraint set $\constraintseti$ in~\eqref{eq:constraint_set} is closed and convex~\cite{1998_RTR_RJ}. Hence, the projection in~\eqref{eq:projected_dynamics_agent} is unique~\cite{2003_FF_PJ_book} and consequently, \eqref{eq:projected_dynamics_agent} takes the form of a differential equation rather than a differential inclusion.
	Similarly, we can associate the following projected dynamical system with~\eqref{eq:uncontrained_dynamics}
	\begin{equation}
		\label{eq:projected_dynamics}
		\dot{\boldz} = \projtk(-\boldf(\boldz)), \quad\text{where}\: \constraintset \ldef \textstyle\bigtimes_{i=1}^{n}\constraintseti. 
	\end{equation}
	Note that, the inter-agent opinion coupling in~\eqref{eq:projected_dynamics_agent} comes through the social term in~\eqref{eq:utility} and the inter-topic opinion coupling comes through the budget constraints given in~\eqref{eq:constraint_set}. 
	
	The equivalence between the agents' dynamics in~\eqref{eq:projected_dynamics_agent} and~\eqref{eq:projected_dynamics} can be established as follows. 
	First, note that each agent $i$'s constraint set $\constraintseti$ defined in~\eqref{eq:constraint_set} is closed and convex and therefore \emph{Clarke}-regular~\cite{1998_RTR_RJ}. Since $\constraintset = \textstyle\bigtimes_{i=1}^{n}\constraintseti$ by definition~\eqref{eq:projected_dynamics} and each $\constraintseti$ is Clarke-regular, $\constraintset$ is Clarke-regular and hence, the tangent cone decomposes as $\tk = \bigtimes_{i=1}^{n} \tki,\:\forall \boldz \in \constraintset$~\cite{1998_RTR_RJ}. Second, the projection onto the tangent cone in both~\eqref{eq:projected_dynamics_agent} and~\eqref{eq:projected_dynamics} is performed using the Euclidean norm. These two facts ensure that the two formulations describe the same opinion evolution. 
	
	For simplicity, we will refer to dynamics~\eqref{eq:projected_dynamics_agent} and~\eqref{eq:projected_dynamics} by $\pdsi$ and $\pds$ respectively. Finally, we denote the equilibrium points of $\pds$ by
	\begin{equation}
		\label{eq:eqpt}
		\eqpt \ldef \{\boldz \in \constraintset \:\mid\: \dot{\boldz}= \projtk(-\boldf(\boldz)) = \boldzero \}.
	\end{equation}
	
	We now show that the projected dynamics~\eqref{eq:projected_dynamics_agent} is also an instantaneous better response dynamics. Specifically, we demonstrate that the projected update direction of each agent $i \in \agt$ in~\eqref{eq:projected_dynamics_agent} aligns with the unconstrained gradient ascent direction in~\eqref{eq:uncontrained_agent_dynamics}. The proof (which we omit) of this result is based on~\cite[Lemma 2.1]{1996_AN_DZ_book}.
		\begin{lemma}
			\thmtitle{$\pdsi$ is an instantaneous better response dynamics} Consider the agent dynamics~\eqref{eq:projected_dynamics_agent} and the unconstrained dynamics~\eqref{eq:uncontrained_agent_dynamics} for any agent $i\in \agt$ . Then for any $\boldz \in \constraintset$,
			$-\boldfi(\boldzi,\boldznoti)^\top \projtki\left(-\boldfi(\boldzi,\boldznoti)\right) \geq 0.$ \remend
		\end{lemma}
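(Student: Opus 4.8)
The plan is to reduce the inequality to a standard property of Euclidean projections onto closed convex cones. Fix an agent $i \in \agt$ and a point $\boldz \in \constraintset$, and abbreviate $\boldv \ldef -\boldfi(\boldzi,\boldznoti) \in \realm$ and $K \ldef \tki$. Since $\constraintseti$ in~\eqref{eq:constraint_set} is a nonempty closed convex polytope and $\boldzi \in \constraintseti$, the tangent cone $K = T_{\constraintseti}(\boldzi)$ is a nonempty closed convex \emph{cone} (this convexity/closedness is already noted in the text via Clarke-regularity of $\constraintseti$~\cite{1998_RTR_RJ}); in particular $\boldzero \in K$ and $2\mathbf{w} \in K$ whenever $\mathbf{w} \in K$. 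The quantity to be bounded is $-\boldfi(\boldzi,\boldznoti)^\top \projtki(-\boldfi(\boldzi,\boldznoti)) = \boldv^\top \Mc{P}_K(\boldv)$, so it suffices to show $\boldv^\top \Mc{P}_K(\boldv) = \|\Mc{P}_K(\boldv)\|^2 \geq 0$.

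To establish this, I would invoke the variational characterization of the Euclidean projection onto a closed convex set: $\Mc{P}_K(\boldv)$ is the unique point of $K$ satisfying $\langle \boldv - \Mc{P}_K(\boldv),\, \mathbf{w} - \Mc{P}_K(\boldv)\rangle \leq 0$ for all $\mathbf{w} \in K$. Taking $\mathbf{w} = \boldzero \in K$ gives $\langle \boldv, \Mc{P}_K(\boldv)\rangle \geq \|\Mc{P}_K(\boldv)\|^2$, and taking $\mathbf{w} = 2\Mc{P}_K(\boldv) \in K$ gives $\langle \boldv, \Mc{P}_K(\boldv)\rangle \leq \|\Mc{P}_K(\boldv)\|^2$; combining the two yields $\langle \boldv, \Mc{P}_K(\boldv)\rangle = \|\Mc{P}_K(\boldv)\|^2 \geq 0$, which is the claim. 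Equivalently, one can appeal directly to Moreau's decomposition $\boldv = \Mc{P}_K(\boldv) + \Mc{P}_{K^\circ}(\boldv)$ with $\Mc{P}_K(\boldv) \perp \Mc{P}_{K^\circ}(\boldv)$, which is the route underlying~\cite[Lemma 2.1]{1996_AN_DZ_book}.

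Finally, I would read off the interpretation: since $-\boldfi(\boldzi,\boldznoti) = \nabla_{\boldzi} U_i(\boldz)$ by~\eqref{eq:uncontrained_agent_dynamics}, the inequality says $\langle \nabla_{\boldzi} U_i(\boldz),\, \dot{\boldzi}\rangle \geq 0$ along $\pdsi$, i.e., agent $i$'s utility is nondecreasing under the projected update and the projected direction never forms an obtuse angle with the unconstrained gradient-ascent direction. There is essentially no obstacle here beyond correctly using the \emph{cone} structure of $\tki$ rather than mere convexity: convexity alone would only yield $\langle \boldv, \Mc{P}_K(\boldv)\rangle \geq \|\Mc{P}_K(\boldv)\|^2$ (from $\mathbf{w}=\boldzero$, if $\boldzero\in K$) without pinning down the sign, so the one point requiring care is that $K$ is closed under nonnegative scaling, which follows from $\boldzero \in \constraintseti$ and convexity of $\constraintseti$.
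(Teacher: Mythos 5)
Your proof is correct and is essentially the same argument the paper relies on: the paper omits the proof and cites~\cite[Lemma 2.1]{1996_AN_DZ_book}, which is precisely the identity $\langle \boldv, \Mc{P}_{K}(\boldv)\rangle = \|\Mc{P}_{K}(\boldv)\|^2 \geq 0$ for a closed convex cone $K$ that you derive from the variational characterization (equivalently, Moreau's decomposition). One tiny note: the fact that $\tki$ is closed under nonnegative scaling is a property of tangent cones by definition (they are cones), not something that needs $\boldzero \in \constraintseti$; this does not affect the validity of your argument.
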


We are now ready to state the goals of this paper.
	
\subsubsection*{Objectives} For the proposed projected opinion dynamics~\eqref{eq:projected_dynamics}, our primary objective is to study how inter-agent social influence and inter-topic coupling, through agents' budget constraints, shape opinion formation across different topics. We aim to analyze the long-term evolution of opinions, particularly the convergence and stability properties of the dynamics in~\eqref{eq:projected_dynamics}. We also aim to investigate the properties of Nash equilibria in the underlying opinion formation game.
	
\section{Asymptotic Behavior of Opinions}
\label{sec:asymptotic_analysis}
In the following result, we first leverage the affine structure of the unconstrained vector field~\eqref{eq:uncontrained_dynamics} to establish the existence and uniqueness of the Carath\'{e}odory solution to~\eqref{eq:projected_dynamics}.

	\begin{theorem}
		\thmtitle{Existence and Uniqueness of Carath\'{e}odory solution}
		Consider the dynamics $\pds$ defined in~\eqref{eq:projected_dynamics}. Then, there exists a unique Carath\'{e}odory solution $\boldz(t),\forall t \geq 0$ of $\pds$ starting from any feasible initial condition $\boldz(0)\in \constraintset$.
	\end{theorem}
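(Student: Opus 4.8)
The plan is to reduce the statement to the classical existence--uniqueness theory for projected dynamical systems by exploiting the affine form of $\boldf$. Rewriting~\eqref{eq:uncontrained_dynamics} as $\boldf(\boldz) = \boldJ\boldz - \boldD\boldp$ with $\boldJ$ from~\eqref{eq:jacobian_mat}, the field $-\boldf$ is globally Lipschitz with constant $L := \|\boldJ\|$ and has linear growth, while $\constraintset = \bigtimes_{i=1}^{n}\constraintseti$ is a nonempty, compact, convex polytope (a finite product of the closed, bounded, convex polytopes in~\eqref{eq:constraint_set}). These two facts are precisely the hypotheses of the standard PDS existence--uniqueness results~\cite{1996_AN_DZ_book,2003_FF_PJ_book}, so one option is a direct citation; below I sketch the self-contained argument.

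For existence I would construct approximate solutions by the projected Euler scheme on a compact interval $[0,T]$: with step $h = T/N$, set $\boldz^{N}_{0} = \boldz(0)$ and $\boldz^{N}_{k+1} = \Mc{P}_{\constraintset}(\boldz^{N}_{k} - h\,\boldf(\boldz^{N}_{k}))$, and let $\boldz^{N}(\cdot)$ be the piecewise-linear interpolant. Every iterate stays in the compact set $\constraintset$, so the family $\{\boldz^{N}\}$ is uniformly bounded; since $\boldf$ is bounded on $\constraintset$ and $\Mc{P}_{\constraintset}$ is nonexpansive, the $\boldz^{N}$ are uniformly Lipschitz in $t$, hence equicontinuous. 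Arzel\`{a}--Ascoli then extracts a subsequence converging uniformly to an absolutely continuous $\boldz(\cdot)$ with values in $\constraintset$, and a limiting argument---using the Moreau decomposition $\projtk(-\boldf(\boldz)) = -\boldf(\boldz) - \Mc{P}_{N_{\constraintset}(\boldz)}(-\boldf(\boldz))$ together with closedness of the normal-cone map $\boldz \mapsto N_{\constraintset}(\boldz)$---shows that $\dot{\boldz} = \projtk(-\boldf(\boldz))$ for a.e.\ $t$, i.e.\ $\boldz(\cdot)$ is a Carath\'{e}odory solution on $[0,T]$. Since solutions never leave the compact set $\constraintset$, there is no finite-time escape, and the solution extends to all $t \geq 0$.

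For uniqueness I would use monotonicity of the normal cone. If $\boldz_{1}(\cdot),\boldz_{2}(\cdot)$ are two Carath\'{e}odory solutions with $\boldz_{1}(0)=\boldz_{2}(0)$, then $-\dot{\boldz}_{a} - \boldf(\boldz_{a}) \in N_{\constraintset}(\boldz_{a})$ for a.e.\ $t$ and $a\in\{1,2\}$, so monotonicity gives $\langle \boldz_{1} - \boldz_{2},\, \dot{\boldz}_{1} - \dot{\boldz}_{2}\rangle \leq -\langle \boldz_{1} - \boldz_{2},\, \boldf(\boldz_{1}) - \boldf(\boldz_{2})\rangle \leq L\|\boldz_{1} - \boldz_{2}\|^{2}$ a.e. Hence $V(t):=\tfrac{1}{2}\|\boldz_{1}(t) - \boldz_{2}(t)\|^{2}$ satisfies $\dot{V} \leq 2LV$ a.e.\ with $V(0)=0$, and Gr\"{o}nwall's inequality forces $\boldz_{1}\equiv\boldz_{2}$. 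The main obstacle is the existence step---namely, verifying that the Euler limit solves the genuinely discontinuous projected ODE almost everywhere rather than merely a relaxed differential inclusion $\dot{\boldz}\in-\boldf(\boldz)-N_{\constraintset}(\boldz)$; the nonexpansiveness of $\Mc{P}_{\constraintset}$, closedness of $N_{\constraintset}(\cdot)$, and the linear growth of the affine field $\boldf$ are exactly what make this step (and global existence) go through. If one instead relies on~\cite{1996_AN_DZ_book,2003_FF_PJ_book}, the only nontrivial checks are the Lipschitz bound $L=\|\boldJ\|$ and the convexity/compactness of $\constraintset$, both immediate here.
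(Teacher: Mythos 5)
Your proposal is correct and follows essentially the same route as the paper: reduce to the classical existence--uniqueness theorem for projected dynamical systems by checking that the affine field $\boldf$ has linear growth and satisfies a (one-sided) Lipschitz bound on the compact, convex polytope $\constraintset$, then cite~\cite{1996_AN_DZ_book}. Your additional self-contained sketch (projected Euler scheme plus Arzel\`{a}--Ascoli for existence, monotonicity of the normal cone plus Gr\"{o}nwall for uniqueness) is sound but goes beyond what the paper does, which stops at verifying the hypotheses of the cited theorem.
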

	\begin{proof}
		To prove this claim, we first note the following for the vector field $\boldf(\boldz)$ defined in~\eqref{eq:uncontrained_dynamics},
		$\|\boldf(\boldz)\| 
			\leq \alpha (1+\|\boldz\|) \:,\forall \boldz \in \constraintset$
		where $\alpha := \max \{\|\boldD+(\laplacian \otimes \identitymn)\|,\| \boldD\boldp \| \}$. For any $\boldz_1,\boldz_2 \in \constraintset$, we have
		$\left[\boldf(\boldz_2)- \boldf(\boldz_1)\right]^\top [\boldz_1-\boldz_2]
			\leq  \alpha \|\boldz_1-\boldz_2\|^2.$
		The claim now follows from \cite[Theorem 2.5]{1996_AN_DZ_book}.
	\end{proof}
	
	\vspace{5pt}
	We now define the variational inequality associated with each agent's dynamics~\eqref{eq:projected_dynamics_agent} and establish a connection between the solutions of these VI's and the single VI associated with~\eqref{eq:projected_dynamics}. We then relate the solution of this VI to the equilibrium points~\eqref{eq:eqpt} of the projected dynamics~\eqref{eq:projected_dynamics}.
		
\begin{define}
  \thmtitle{Agent specific Variational Inequality Problem} For any agent $i \in \agt$, given a closed, convex set $\constraintseti\subseteq \realm$ and a function $\boldfi(\cdot,\boldznoti):\constraintseti \to \realm$ for some $\boldznoti \in \constraintsetnoti$, the agent specific variational inequality problem is to find $\boldzi^* \in \constraintseti$ such that $ \langle \boldfi(\boldzi^{*},\boldznoti), \boldx_{i} - \boldzi^{*} \rangle \geq 0, \quad\forall \boldxi \in \constraintseti.$ We refer (with some abuse of notation) to this problem and its solution set by $\varineqfi$ and $\solvarineqfi$, respectively. \bulletend
	\end{define}	

\begin{lemma}
  \thmtitle{Relation between solutions of $\varineqfi$ and $\varineqf$}
  $\boldz^* = (\boldzi^*,\boldznoti^*) \in \solvarineqf$ if and only if $\boldzi^* \in \solvarineqfistar,\:\forall i \in \agt$.
		\label{lem:reln_betn_var_ineq}
	\end{lemma}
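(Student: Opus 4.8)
The plan is to exploit the two structural facts that reduce the statement to bookkeeping: the constraint set $\constraintset=\bigtimes_{i=1}^{n}\constraintseti$ is a Cartesian product, and the vector field in~\eqref{eq:uncontrained_dynamics} decomposes block-wise as $\boldf(\boldz)=\big(\boldfi(\boldzi,\boldznoti)\big)_{i\in\agt}$, so that for every $\boldx,\boldz\in\constraintset$ the pairing splits as $\langle\boldf(\boldz),\boldx-\boldz\rangle=\agtsumi\langle\boldfi(\boldzi,\boldznoti),\boldxi-\boldzi\rangle$. Both directions follow by choosing the right test vectors in this product.

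For the ``only if'' direction, suppose $\boldz^*=(\boldzi^*,\boldznoti^*)\in\solvarineqf$. Fix $i\in\agt$ and an arbitrary $\boldxi\in\constraintseti$, and form the test point $\boldx=(\boldxi,\boldznoti^*)$; this lies in $\constraintset$ precisely because $\constraintset$ is a product of the $\constraintsetl$'s. Substituting into the defining inequality of $\varineqf$ and using the block decomposition, every block $l\neq i$ contributes $\langle\boldfl(\boldzl^*,\boldznotl^*),\boldzl^*-\boldzl^*\rangle=0$, so we are left with $\langle\boldfi(\boldzi^*,\boldznoti^*),\boldxi-\boldzi^*\rangle\geq 0$. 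Since $\boldxi\in\constraintseti$ was arbitrary, $\boldzi^*\in\solvarineqfistar$, and since $i\in\agt$ was arbitrary this holds for all agents.

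For the ``if'' direction, suppose $\boldzi^*\in\solvarineqfistar$ for every $i\in\agt$. Take any $\boldx=(\boldx_1,\dots,\boldx_n)\in\constraintset$; then $\boldxi\in\constraintseti$ for each $i$ by the product structure, so the block decomposition gives $\langle\boldf(\boldz^*),\boldx-\boldz^*\rangle=\agtsumi\langle\boldfi(\boldzi^*,\boldznoti^*),\boldxi-\boldzi^*\rangle\geq 0$, each summand being nonnegative by hypothesis. Hence $\boldz^*\in\solvarineqf$.

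There is no genuine obstacle here; the only points needing (minor) care are verifying that the single-block perturbation $(\boldxi,\boldznoti^*)$ is admissible --- which is exactly the Cartesian-product structure of $\constraintset$ from~\eqref{eq:constraint_set} and the definition in~\eqref{eq:projected_dynamics} --- and writing $\boldf$ in the block form $\big(\boldfi(\boldzi,\boldznoti)\big)_{i\in\agt}$ so that the inner product separates into a sum over agents, which is immediate from~\eqref{eq:uncontrained_agent_dynamics}--\eqref{eq:uncontrained_dynamics}.
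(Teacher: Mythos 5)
Your proof is correct and follows essentially the same route as the paper's: the same block decomposition of the inner product over agents, the same single-block test point $(\boldxi,\boldznoti^*)$ for the ``only if'' direction, and the same summation of the agent-wise inequalities for the ``if'' direction. No gaps.
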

	\begin{proof}
		To prove this, observe that for any $\boldx,\boldz \in \constraintset$
		\begin{equation}
			\innerproduct = \textstyle\agtsumi \: \innerproducti.
			\label{eq:relation_betn_varineq_proof_1}
		\end{equation}
		Suppose $\boldz^* \in \solvarineqf$. Using~\eqref{eq:relation_betn_varineq_proof_1}, we have
		\begin{equation}
			\textstyle\agtsumi \,\, \innerproductistar \geq 0, \: \forall \boldx \in \constraintset.
			\label{eq:relation_betn_varineq_proof_2}
                      \end{equation}
Let $\boldy^{(i)}:=  [\boldz_{1}^{*\top},\ldots,\boldz_{i-1}^{*\top},\bolds_{i}^\top,\boldz_{i+1}^{*\top},\ldots,\boldz_{n}^{*\top}]^\top \in \constraintset$, $\forall \bolds_i \in \constraintseti$ and $\forall i \in \{ 1,\ldots,n \}$.            
With $\boldx = \boldy^{(i)}$, \eqref{eq:relation_betn_varineq_proof_2} gives
		\begin{align*}
			0 & \leq \langle \boldfi(\boldzi^*,\boldznoti^*),\bolds_i-\boldzi^* \rangle + \textstyle\sum_{j \neq i} \langle \boldf_j(\boldzj^*,\boldznotj^*),\boldzj^*-\boldzj^* \rangle \\
			  & = \langle \boldfi(\boldzi^*,\boldznoti^*),\bolds_i-\boldzi^* \rangle, \quad \forall \bolds_i \in \constraintseti\,. 
		\end{align*}
		Hence, $\boldzi^* \in \solvarineqfistar,\:\forall i \in \agt$.
		
		Next, suppose $\boldzi^* \in \solvarineqfistar \:, \forall i \in \agt$. Then $\forall i \in \agt$, 
		$
			\innerproductistar \geq 0,\: \forall \boldxi \in \constraintseti.
		$
Then, ~\eqref{eq:relation_betn_varineq_proof_1} implies
		\begin{align*}
			0 \leq  \textstyle\sum\limits_{i \in \agt} \innerproductistar = \langle \boldf(\boldz^*),\boldx-\boldz^* \rangle, \,\, \forall \boldx \in \constraintset\,. 
		\end{align*}		 
Hence, $\boldz^*\in \solvarineqf$. The proof is now complete.
	\end{proof}
		
	In the next result, we use a well known result from~\cite{1996_AN_DZ_book} to 
relate $\eqpt$, the equilibrium points of $\pds$, and the solution set of $\varineqf$ and show that $\eqpt$ is not empty.
	
\begin{lemma}
  \thmtitle{Existence of equilibrium opinions}
  \label{lem:eqpt_exist}
  For the dynamics $\pds$ in~\eqref{eq:projected_dynamics},
  $\eqpt = \solvarineqf \neq \varnothing$.
\end{lemma}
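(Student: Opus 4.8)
The statement has two parts: (i) the set equality $\eqpt = \solvarineqf$, and (ii) non-emptiness of this common set. The plan is to handle each separately, drawing on the standard PDS--VI correspondence in~\cite{1996_AN_DZ_book}.

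\textbf{Step 1: $\eqpt = \solvarineqf$.} This is the classical equivalence between equilibria of a projected dynamical system and solutions of the associated variational inequality. Concretely, for a closed convex set $\constraintset$ and vector field $\boldf$, one has $\projtk(-\boldf(\boldz)) = \boldzero$ if and only if $\langle \boldf(\boldz), \boldx - \boldz\rangle \geq 0$ for all $\boldx \in \constraintset$; see~\cite[Theorem~2.4]{1996_AN_DZ_book} (or the equivalent statement via the normal cone: $-\boldf(\boldz) \in N_{\constraintset}(\boldz)$). Since $\constraintset = \bigtimes_{i=1}^n \constraintseti$ is a non-empty, closed, bounded, convex polytope (each $\constraintseti$ being such by the remark following~\eqref{eq:constraint_set}), this theorem applies directly and yields $\eqpt = \solvarineqf$. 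I would simply cite the result and note that its hypotheses are met; no computation is needed here.

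\textbf{Step 2: $\solvarineqf \neq \varnothing$.} The cleanest route is to invoke the existence theorem for VIs on compact convex sets: if $\constraintset$ is non-empty, compact, and convex and $\boldf$ is continuous on $\constraintset$, then $\solvarineqf \neq \varnothing$ (this is the standard consequence of Brouwer's fixed-point theorem, stated as~\cite[Corollary~2.2.5]{2003_FF_PJ_book} or in~\cite{1996_AN_DZ_book}). Here $\constraintset$ is compact because each $\constraintseti$ is a bounded polytope, and $\boldf(\boldz) = \boldD(\boldz - \boldp) + (\laplacian \otimes \identitym)\boldz$ is affine, hence continuous. Alternatively, one could combine Step 1 with a Lyapunov/LaSalle argument — but since the existence theorem is elementary and already in the cited references, I would use it directly. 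One subtlety to confirm: the paper's VI definition requires $\boldg : \constraintset \to \real^p$; here the relevant $p = mn$ and $\boldf : \realmn \to \realmn$ restricts fine.

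\textbf{Main obstacle.} There is essentially no hard step — both parts are textbook results whose hypotheses are manifestly satisfied by the compact polytope $\constraintset$ and the affine $\boldf$. The only care needed is bookkeeping: making sure $\constraintset$ is correctly identified as compact convex (which follows from the structure of each $\constraintseti$) and citing the precise versions of the PDS--VI equivalence and the VI existence theorem from~\cite{1996_AN_DZ_book}. I would keep the proof to two or three sentences: assert compactness and continuity, cite the existence result for non-emptiness, and cite the equilibrium--VI equivalence for the set equality.
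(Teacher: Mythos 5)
Your proposal is correct and follows essentially the same route as the paper: the set equality is obtained by citing the standard PDS--VI equivalence from~\cite{1996_AN_DZ_book}, and non-emptiness follows from compactness and convexity of $\constraintset$ together with continuity of the affine vector field $\boldf$, via the classical VI existence theorem (the paper uses~\cite[Theorem 2.1]{1996_AN_DZ_book}). No gaps; your bookkeeping about $\constraintset$ being a compact convex polytope matches the paper's justification exactly.
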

\begin{proof}
  The claim that $\eqpt = \solvarineqf$ is a direct consequence of~~\cite{1996_AN_DZ_book}. Next, since $\constraintseti$ is compact and convex, $\constraintset$ as defined in~\eqref{eq:projected_dynamics} is also compact and convex. Further, the vector field $\boldf(\boldz)$ is affine and hence continuous in $\constraintset$. Non-emptiness of $\eqpt = \solvarineqf$  now follows from~\cite[Theorem 2.1]{1996_AN_DZ_book}.
\end{proof}
		
We can immediately show that any opinion trajectory converges to the set of equilibrium points, which we know is non-empty (due to Lemma~\ref{lem:eqpt_exist}). First, we define the function
\begin{align}
	W(\boldz) & := \agtsumk U_{k}(\boldz) + 0.5 \,\boldz^\top(\laplacian\otimes \identitym)\boldz \notag \\
	& = - 0.5 \, \boldz^\top \boldJ \boldz + \boldz^\top \boldD\boldp - 0.5 \, \boldp^\top\boldD\boldp \,,
	\label{eq:potential_function}
\end{align}
which will be useful for showing convergence. Later, in Section~\ref{sec:game_theoretic_analysis}, we show that this function can work as a potential function for the associated non-cooperative game.
	
		\begin{theorem}
		\thmtitle{Opinions tend to the equilibrium set}
		Consider the dynamics $\pds$ in~\eqref{eq:projected_dynamics}. Let $\boldz(t)$ denote the unique solution to $\pds$ starting from a feasible initial condition $\boldz_0 \in \constraintset$. Then, $\boldz(t)$ converges to $\eqpt$.
		\label{thm:convg_to_eqm_set}
		\end{theorem}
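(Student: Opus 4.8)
The plan is to use the function $W$ defined in~\eqref{eq:potential_function} as a (non-strict) Lyapunov certificate and then invoke an invariance principle for the projected dynamics~\eqref{eq:projected_dynamics}. The first step is the gradient computation: from the quadratic expression $W(\boldz) = -\tfrac12\boldz^\top\boldJ\boldz + \boldz^\top\boldD\boldp - \tfrac12\boldp^\top\boldD\boldp$ and the symmetry of $\boldJ = \boldD+\laplacian\otimes\identitym$, one gets $\nabla W(\boldz) = -\boldJ\boldz+\boldD\boldp = -\boldf(\boldz)$, using the definition of $\boldf$ in~\eqref{eq:uncontrained_dynamics}. Since, by the preceding existence and uniqueness theorem, the Carath\'{e}odory solution $\boldz(t)$ is absolutely continuous and $W$ is $C^1$, for almost every $t$ we have $\frac{d}{dt}W(\boldz(t)) = \langle\nabla W(\boldz(t)),\dot{\boldz}(t)\rangle = -\boldf(\boldz)^\top\projtk(-\boldf(\boldz))$. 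Invoking the elementary fact that, for a closed convex cone $\Mc{C}$ and any $\boldv$, $\langle\boldv,\Mc{P}_{\Mc{C}}(\boldv)\rangle = \|\Mc{P}_{\Mc{C}}(\boldv)\|^2$ (Moreau's decomposition), together with the decomposition $\tk=\bigtimes_{i}\tki$ established earlier (equivalently, summing the instantaneous-better-response lemma over the agents), this yields
\[
  \frac{d}{dt}W(\boldz(t)) \;=\; \big\|\projtk(-\boldf(\boldz(t)))\big\|^2 \;=\; \|\dot{\boldz}(t)\|^2 \;\geq\; 0 ,
\]
with equality if and only if $\boldz(t)\in\eqpt$, by the definition~\eqref{eq:eqpt} of $\eqpt$. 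Hence $t\mapsto W(\boldz(t))$ is nondecreasing along the solution.

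Next, since the trajectory stays in the compact set $\constraintset$ and $W$ is continuous, $W$ is bounded on $\constraintset$, so $W(\boldz(t))$ converges to some $W^\star\in\real$ as $t\to\infty$. The trajectory being bounded, its $\omega$-limit set $\Omega\subseteq\constraintset$ is nonempty and compact; and because the Carath\'{e}odory solution is unique (so the system induces a well-defined flow), $\Omega$ is invariant under~\eqref{eq:projected_dynamics}. On $\Omega$, $W\equiv W^\star$ by continuity of $W$ and convergence of $W(\boldz(t))$; therefore any solution initialized in $\Omega$ satisfies $\frac{d}{dt}W\equiv 0$, i.e. $\|\projtk(-\boldf(\boldz))\|\equiv 0$, which forces $\Omega\subseteq\eqpt$. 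Standard arguments for bounded trajectories then give $\mathrm{dist}(\boldz(t),\Omega)\to 0$, hence $\mathrm{dist}(\boldz(t),\eqpt)\to 0$; that is, $\boldz(t)$ converges to $\eqpt$ (which is nonempty, consistently with Lemma~\ref{lem:eqpt_exist}). In short, this is LaSalle's invariance principle applied to the PDS with the nondecreasing function $W$: the solution converges to the largest invariant set contained in $\{\boldz\in\constraintset : \projtk(-\boldf(\boldz))=\boldzero\} = \eqpt$.

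The step I expect to be the main obstacle is the rigorous justification of the invariance argument for the discontinuous right-hand side of~\eqref{eq:projected_dynamics}: one must argue that the $\omega$-limit set of a Carath\'{e}odory solution is invariant and that a LaSalle-type principle is applicable in this setting. This is precisely where uniqueness of solutions and the continuous (indeed affine) structure of $\boldf$ are used — they guarantee continuous dependence on initial conditions, so that the classical proof of invariance of $\omega$-limit sets carries over; alternatively one can cite an off-the-shelf invariance principle for projected dynamical systems. The remaining ingredients — the identity $\nabla W=-\boldf$, monotonicity of $W$ along solutions, and the identification $\{\,\dot{\boldz}=\boldzero\,\}=\eqpt$ — are routine.
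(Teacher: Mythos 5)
Your proposal is correct and follows essentially the same route as the paper: the paper uses $V=-W$ as a nonincreasing Lyapunov-like function (so the signs are flipped relative to yours), obtains $\dot V = \boldf(\boldz)^\top\projtk(-\boldf(\boldz))\le 0$ from~\cite[Lemma 2.1]{1996_AN_DZ_book} where you invoke Moreau's decomposition to get the slightly sharper identity $\dot W=\|\projtk(-\boldf(\boldz))\|^2$, and then concludes via the invariance principle for discontinuous dynamics of~\cite[Theorem 2]{2008_JC} (together with closedness of the affine VI solution set) exactly where you hand-roll the $\omega$-limit-set invariance argument. The obstacle you flag — justifying LaSalle for the discontinuous right-hand side — is precisely what the paper discharges by citation, so no gap remains.
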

		\begin{proof}
		Consider the Lyapunov-like function $V(\boldz) \ldef -W(\boldz)$, with $W(\cdot)$ defined in~\eqref{eq:potential_function}.
		Then, it can be verified that $\nabla V(\boldz)=\boldf(\boldz)$. The time derivative of $V(\boldz)$ along the trajectories of~\eqref{eq:projected_dynamics} is $ \dot{V}(\boldz) = \boldf(\boldz)^\top \projtk(-\boldf(\boldz)) \leq 0$.  This inequality can be proved using~\cite[Lemma 2.1]{1996_AN_DZ_book} and Cauchy-Bunyakovskii-Schwarz inequality.
                Now, let us define the set $\mathcal{W}:=\{\boldz \in \constraintset \;\mid \; \dot{V}(\boldz)=0\}$.  Using~\cite[Lemma 2.1]{1996_AN_DZ_book}, we can again conclude that $\dot{V}(\boldz)=0$ if and only if $\boldf(\boldz)=\boldzero$ or $\projtk(-\boldf(\boldz))=\boldzero$. Since $\{\boldz \in \constraintset\;\mid\; \boldf(\boldz)= \boldzero \} \subseteq \eqpt$, we conclude that $\mathcal{W}=\eqpt$. Notice that the vector field $\boldf(\boldz)$ defined in~\eqref{eq:uncontrained_dynamics} is affine. Hence, the variational inequality associated with the corresponding $\pds$ is an \emph{Affine Variational Inequality}. It is well known that the solution set of an Affine variational inequality is closed~\cite[Corollary 5.3]{2005_GL_NT_NY_book}. This, along with Lemma~\ref{lem:eqpt_exist}, implies that $\eqpt$ is closed. Thus, $\overline{\mathcal{W}}=\eqpt$. Finally, we apply the invariance principle for discontinuous dynamics~\cite[Theorem 2]{2008_JC} to ensure convergence of $\boldz(t)$ to the largest invariant set in $\overline{\mathcal{W}}$, which in our case is the equilibrium set $\eqpt$ itself.
              \end{proof}
	
In the remaining portion of this section, we leverage the properties of the Jacobian Matrix, $\boldJ$, of the unconstrained vector field $\boldf(\boldz)$ to comment more about the equilibrium set and the asymptotic behavior of the opinion trajectories.
In fact, if $\boldJ$ is positive semi-definite, we can establish that $\eqpt$ is convex and show convergence of solutions of~\eqref{eq:projected_dynamics} to an equilibrium point in $\eqpt$. We formally state these results next. 

\begin{theorem}
  \thmtitle{Properties of equilibria and opinion trajectories for positive semi-definite Jacobian}
  \label{thm:inv_jacobian_properties}
  Consider the dynamics $\pds$ in~\eqref{eq:projected_dynamics}, its equilibria $\eqpt$ in~\eqref{eq:eqpt} and $W(\cdot)$ in~\eqref{eq:potential_function}. Let $\boldz(t)$ be the unique solution to $\pds$ from a feasible initial condition $\boldz_0 \in \constraintset$. Moreover, let $\boldJ$ be positive semi-definite.
  Then the following are true.
  \begin{enumerate}
			\item \label{lem:global_monotone_attractivity_and_stability} Every equilibrium point $\boldz^*\in \eqpt$ is a global monotone attractor\footref{footnote:monotonicity_and_stability_defs} and hence Lyapunov stable.
			
			\item \label{lem:convexity_of_eqm_set} $\eqpt = \argmax_{\boldz \in \constraintset} W(\boldz)$, and hence $\eqpt$ is convex.
			
			\item \label{thm:convg_to_eqm_pt} $\boldz(t)$ converges to the equilibrium point $\boldz_0^* \in \eqpt$, where $\boldz_0^{*} := \argmin_{\boldy \in \eqpt} \|\boldz_0-\boldy\|$.
		\end{enumerate}
	\end{theorem}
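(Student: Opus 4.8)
The plan is to establish item~2 first, since it underpins the other two, then item~1, and finally item~3.

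For item~2, note that because $\boldJ \succeq 0$ the function $V = -W$ in~\eqref{eq:potential_function} is a convex quadratic, so $W$ is concave, with $\nabla W(\boldz) = -\boldf(\boldz)$. Maximizing a concave function over the compact convex polytope $\constraintset$, the first-order optimality condition is $\langle \nabla W(\boldz^*), \boldx - \boldz^* \rangle \le 0$ for all $\boldx \in \constraintset$, i.e. $\langle \boldf(\boldz^*), \boldx - \boldz^* \rangle \ge 0$ for all $\boldx \in \constraintset$; this is exactly $\boldz^* \in \solvarineqf$, which by Lemma~\ref{lem:eqpt_exist} equals $\eqpt$. Hence $\eqpt = \argmax_{\boldz\in\constraintset} W(\boldz)$, and the argmax set of a concave function over a convex set is convex.

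For item~1, fix $\boldz^* \in \eqpt = \solvarineqf$ and a solution $\boldz(\cdot)$, and differentiate $\tfrac12\|\boldz(t)-\boldz^*\|^2$. Using Moreau's decomposition $-\boldf(\boldz) = \projtk(-\boldf(\boldz)) + \boldn$ with $\boldn \in N_{\constraintset}(\boldz)$ and $\langle \projtk(-\boldf(\boldz)),\boldn\rangle = 0$, together with $\langle \boldn, \boldz^*-\boldz\rangle \le 0$ (normal cone property, $\boldz^* \in \constraintset$), one obtains $\langle \boldz(t)-\boldz^*, \dot\boldz(t)\rangle \le \langle \boldz(t)-\boldz^*, -\boldf(\boldz(t))\rangle$. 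Since $\boldf$ is affine with $\nabla\boldf = \boldJ$, $\langle \boldz-\boldz^*, -\boldf(\boldz)\rangle = -(\boldz-\boldz^*)^\top\boldJ(\boldz-\boldz^*) - \langle \boldf(\boldz^*), \boldz - \boldz^*\rangle \le 0$, using $\boldJ \succeq 0$ and the variational inequality satisfied by $\boldz^*$. Thus $t \mapsto \|\boldz(t)-\boldz^*\|$ is non-increasing, which is precisely global monotone attractivity, and Lyapunov stability follows by taking $\delta = \varepsilon$.

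For item~3, Theorem~\ref{thm:convg_to_eqm_set} gives $\mathrm{dist}(\boldz(t),\eqpt)\to 0$; by compactness of $\constraintset$ some sequence $\boldz(t_k)$ converges to a point $\boldz^\infty$, which lies in $\eqpt$ since $\eqpt$ is closed. By item~1, $\|\boldz(t) - \boldz^\infty\|$ is non-increasing and tends to $0$ along $t_k$, hence $\boldz(t) \to \boldz^\infty$. It remains to identify $\boldz^\infty$ with $\boldz_0^* := \argmin_{\boldy\in\eqpt}\|\boldz_0 - \boldy\| = \Mc{P}_{\eqpt}(\boldz_0)$ (well defined, as $\eqpt$ is nonempty, closed and, by item~2, convex).

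This last identification is the main obstacle: monotone attractivity alone only yields $\|\boldz^\infty - \boldy\|\le\|\boldz_0-\boldy\|$ for all $\boldy\in\eqpt$, which does not by itself pin down $\boldz^\infty$. To close the gap I would exploit the gradient structure made explicit by item~2. Since $V(\boldz) = \tfrac12\boldz^\top\boldJ\boldz - \boldz^\top\boldD\boldp + \mathrm{const}$ is constant on $\eqpt$, every direction $\boldd$ in the affine hull of $\eqpt$ lies in $\ker\boldJ$ and satisfies $(\boldD\boldp)^\top\boldd = 0$; consequently $\langle \boldf(\boldz),\boldd\rangle = 0$ for every $\boldz\in\constraintset$, so along the flow $\tfrac{d}{dt}\langle\boldz(t),\boldd\rangle = -\langle\boldn(t),\boldd\rangle$, where $\boldn(t)$ is the normal-cone correction from the Moreau decomposition. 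One then argues that these corrections cannot push the trajectory past $\eqpt$ along a direction tangent to it, so that $\boldz^\infty - \boldz_0$ makes an obtuse angle with $\boldy-\boldz^\infty$ for every $\boldy\in\eqpt$, i.e. $\boldz^\infty = \Mc{P}_{\eqpt}(\boldz_0)$. Making this step rigorous — ruling out tangential overshoot when the trajectory slides along faces of $\constraintset$ — is the crux; alternatively, one may appeal to a convergence theorem for affine (monotone) projected dynamical systems as in~\cite{1996_AN_DZ_book}.
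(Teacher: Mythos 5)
Your proofs of items~1 and~2 are correct. Item~2 is essentially the paper's argument (concavity of $W$ from $\boldJ\succeq 0$, the equivalence of the first-order optimality condition over $\constraintset$ with $\varineqf$, then Lemma~\ref{lem:eqpt_exist}). For item~1 the paper simply cites a monotone-attractor theorem for projected dynamical systems; your direct computation via the Moreau decomposition, the normal-cone inequality, positive semi-definiteness of $\boldJ$, and the variational inequality at $\boldz^*$ is a correct, self-contained substitute.

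The gap is in item~3, and you have located it precisely: monotone attractivity of every point of $\eqpt$ together with $\min_{\boldy\in\eqpt}\|\boldz(t)-\boldy\|\to 0$ yields convergence to \emph{some} $\boldz^\infty\in\eqpt$, but not that $\boldz^\infty=\Mc{P}_{\eqpt}(\boldz_0)$. Your sketched repair does not close it: the orthogonality $\langle\boldf(\boldz),\mathbf{d}\rangle=0$ for $\mathbf{d}$ parallel to the affine hull of $\eqpt$ removes the drift caused by $-\boldf$, but the normal-cone term $\mathbf{n}(t)$ only satisfies $\langle\mathbf{n}(t),\boldy-\boldz(t)\rangle\le 0$ for $\boldy\in\constraintset$; decomposing $\langle\mathbf{n}(t),\boldy-\boldz^\infty\rangle=\langle\mathbf{n}(t),\boldy-\boldz(t)\rangle+\langle\mathbf{n}(t),\boldz(t)-\boldz^\infty\rangle$ gives two terms of opposite signs (the second is $\ge 0$ because $\boldz^\infty\in\constraintset$), so the tangential displacement cannot be signed this way. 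You should also know that the paper's own proof bridges exactly this step by asserting that $\argmin_{\boldy\in\eqpt}\|\boldz(t)-\boldy\|=\boldz_0^*$ for all $t\ge 0$, justified only by monotone attraction and continuity of $\boldz(t)$. That implication is not valid for arbitrary continuous curves: if $\eqpt$ were the segment from $(0,0)$ to $(1,0)$ in the plane, the straight path from $(0,1)$ to $(0.5,0.5)$ strictly decreases the distance to every point of the segment, yet moves the nearest point from $(0,0)$ to $(0.5,0)$. So either additional structure of the flow must be used to show that the projection onto $\eqpt$ is time-invariant, or the conclusion should be weakened to convergence to some point of $\eqpt$ (which your argument does establish), or one must invoke a theorem proving the stronger statement for monotone affine projected systems. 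As submitted, your item~3 is incomplete, but the missing step is genuinely the hard part of this theorem rather than a routine omission.
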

	\begin{proof}
	To prove Claim~\ref{lem:global_monotone_attractivity_and_stability}, notice that under the stated assumptions the vector field $\boldf(\boldz)$ defined in~\eqref{eq:uncontrained_dynamics} is monotone\footref{footnote:monotonicity_and_stability_defs} on $\constraintset$. The claim now follows from~\cite[Theorem 3.5]{1996_AN_DZ_book}.
	
Next, to prove Claim~\ref{lem:convexity_of_eqm_set}, using the hypothesis, note that $-W(\boldz)$ is convex on $\constraintset$, since its Hessian $\boldJ$ is positive semi-definite in $\constraintset$. From~\cite[Proposition 2.3]{1996_AN_DZ_book}, we know that the set $\solvarineqf$ is equal to the set of solutions to the convex optimization problem $\max_{\boldz \in \constraintset} W(\boldz)$. Hence, $\solvarineqf$ is convex. The claim now follows from Lemma~\ref{lem:eqpt_exist}.
	
Finally, we show Claim~\ref{thm:convg_to_eqm_pt}. We know from Claim~\ref{lem:convexity_of_eqm_set} and~\cite[Corollary 5.3]{2005_GL_NT_NY_book} that $\eqpt \subseteq \constraintset$ is convex and compact. This means for any given $t \geq 0$, $\argmin_{\boldy \in \eqpt} \|\boldz(t)-\boldy\|$ is a singleton set. We now show that $\boldz(t)$ converges to $\boldz^*_{0}$. From Claim~\ref{lem:global_monotone_attractivity_and_stability}, we know that $\boldz^{*}_{0}$ is a global monotone attractor which means $\|\boldz(t)-\boldz^*_0\|$ is non-increasing, $\forall t \geq 0$. We also know that $\boldz(t)$ is continuous, $\forall t \geq 0$. All of the above observations imply that $\argmin_{\boldy \in \eqpt}\|\boldz(t)-\boldy\|=\boldz^*_0\:,\forall t \geq 0$. Additionally, from Theorem~\ref{thm:convg_to_eqm_set}, we know that $\boldz(t)$ converges to $\eqpt$, \emph{i.e.,} $\min_{\boldy \in \eqpt}\|\boldz(t)-\boldy\| \to 0$ as $t \to \infty$. Hence, $\|\boldz(t)-\boldz^*_0\|\to 0$ as $t \to \infty$. The proof is now complete.
\end{proof}

Note that Theorem~\ref{thm:inv_jacobian_properties} precludes the existence of periodic or oscillatory solutions for the dynamics~\eqref{eq:projected_dynamics} under the assumption that $\boldJ$ is positive semi-definite.
To conclude this section, we provide a stricter result for networks with very weak antagonistic relations~Assumption~\ref{asmp:WAR_WSI}.
In such a scenario, the following result guarantees convergence of the solution $\boldz(t)$ to a unique equilibrium point, starting from any initial condition $\boldz(0) \in \constraintset$.	
	
\begin{theorem}
  \thmtitle{Convergence to a unique equilibrium point for very weak antagonistic relations} Consider the dynamics $\pds$ defined in~\eqref{eq:projected_dynamics}. Suppose Assumption~\ref{asmp:WAR_WSI} holds. Then $\eqpt$ is a singleton set. Moreover, the unique equilibrium point is a strictly global monotone attractor\footref{footnote:monotonicity_and_stability_defs}.
  \label{thm:convg_to_uniq_eqm}
\end{theorem}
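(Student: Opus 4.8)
The plan is to show that Assumption~\ref{asmp:WAR_WSI} forces the Jacobian $\boldJ$ in~\eqref{eq:jacobian_mat} to be positive definite; since the vector field $\boldf$ in~\eqref{eq:uncontrained_dynamics} is affine, this makes $\boldf$ strongly monotone on $\constraintset$, and both conclusions (that $\eqpt$ is a singleton and that the unique equilibrium is a strictly global monotone attractor) then follow from the VI/PDS machinery of~\cite{1996_AN_DZ_book} already invoked in the earlier results.

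\textbf{Step 1 (positive definiteness of $\boldJ$).} For an arbitrary $\boldz \in \realmn$, write $\boldz^{(j)} := (z_1^j,\ldots,z_n^j)^\top$ for the vector of all agents' opinions on topic $j \in \tp$. Using the block structure of the Kronecker product and the standard signed-Laplacian quadratic-form identity (valid since $\adjmat = \adjmat^\top$), $\boldz^\top(\laplacian\otimes\identitym)\boldz = \sum_{j\in\tp} \boldz^{(j)\top}\laplacian\boldz^{(j)} = \tfrac12\sum_{j\in\tp}\sum_{i\in\agt}\sum_{k\neq i}\aik\,(z_i^j - z_k^j)^2$. Splitting the inner sum over $\Nif$ and $\Nie$, discarding the nonnegative friendly terms (where $\aik>0$), applying $(a-b)^2 \le 2a^2 + 2b^2$, and using the symmetry of the enemy weights ($\aik = \aki$ and $k\in\Nie \iff i\in\mathcal N_k^e$) to replace $\sum_i\sum_{k\in\Nie}|\aik|(z_k^j)^2$ by $\sum_i\sum_{k\in\Nie}|\aik|(z_i^j)^2$, one obtains $\boldz^\top(\laplacian\otimes\identitym)\boldz \ge -2\sum_{i\in\agt}\sum_{j\in\tp}\big(\sum_{k\in\Nie}|\aik|\big)(z_i^j)^2$. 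Adding $\boldz^\top\boldD\boldz = \sum_{i,j}\wij(z_i^j)^2$ gives $\boldz^\top\boldJ\boldz \ge \sum_{i,j}\big(\wij - 2\sum_{k\in\Nie}|\aik|\big)(z_i^j)^2 \ge \mu\,\|\boldz\|^2$ with $\mu := \min_{i\in\agt}\big(\min_{j\in\tp}\wij - 2\sum_{k\in\Nie}|\aik|\big)$, which is strictly positive precisely by Assumption~\ref{asmp:WAR_WSI}. Hence $\boldJ \succeq \mu\,\identitymn \succ 0$.

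\textbf{Step 2 (uniqueness of the equilibrium).} Since $\boldf$ is affine with constant Jacobian $\boldJ$, $\big(\boldf(\boldz_1)-\boldf(\boldz_2)\big)^\top(\boldz_1-\boldz_2) = (\boldz_1-\boldz_2)^\top\boldJ(\boldz_1-\boldz_2) \ge \mu\|\boldz_1-\boldz_2\|^2$ for all $\boldz_1,\boldz_2 \in \constraintset$, so $\boldf$ is strongly (hence strictly) monotone on $\constraintset$. A strictly monotone map admits at most one solution to $\varineqf$, so $\solvarineqf$ is a singleton; combined with Lemma~\ref{lem:eqpt_exist} ($\eqpt = \solvarineqf \neq \varnothing$), this shows $\eqpt = \{\boldz^*\}$ for a unique $\boldz^*$.

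\textbf{Step 3 (strict attractivity) and the main obstacle.} With strong monotonicity of $\boldf$ in hand, $\boldz^*$ is a strictly global monotone attractor by the strict-monotonicity counterpart in~\cite{1996_AN_DZ_book} of the monotone-attractor result used for Theorem~\ref{thm:inv_jacobian_properties}; equivalently, taking $V(\boldz) := \tfrac12\|\boldz-\boldz^*\|^2$ one uses $\boldz^*-\boldz \in \tk$ (convexity of $\constraintset$) together with~\cite[Lemma 2.1]{1996_AN_DZ_book} and the VI inequality $\langle\boldf(\boldz^*),\boldz-\boldz^*\rangle\ge 0$ to get $\dot V(\boldz) \le -\langle\boldf(\boldz),\boldz-\boldz^*\rangle \le -\mu\|\boldz-\boldz^*\|^2$, strictly negative whenever $\boldz\neq\boldz^*$ (and in fact exponential convergence). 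The only real work is Step 1: converting the spectral condition $\boldJ\succ0$ into exactly the per-agent, per-topic inequality of Assumption~\ref{asmp:WAR_WSI}. This requires using the symmetry of $\adjmat$ twice — once in the Laplacian quadratic-form identity and once more when re-symmetrizing the cross terms — and observing that the factor $2$ in \ref{asmp:WAR_WSI} is exactly what $(a-b)^2\le 2a^2+2b^2$ produces after the friendly Laplacian is dropped, so the assumption is tight for this argument; everything else is a direct appeal to results already cited in the paper.
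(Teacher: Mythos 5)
Your proof is correct and follows the same overall skeleton as the paper's (establish that $\boldJ$ is positive definite, conclude strict monotonicity of the affine map $\boldf$, then invoke the existence/uniqueness and stability results of~\cite{1996_AN_DZ_book}), but the technical core --- positive definiteness of $\boldJ$ under Assumption~\ref{asmp:WAR_WSI} --- is argued differently. The paper applies the Gerschgorin disc theorem to the symmetric matrix $\boldJ$: each disc has center $\wij + \agtsumknoti\aik$ and radius $\agtsumknoti|\aik|$, and since $\agtsumknoti\aik - \agtsumknoti|\aik| = -2\enemysumi|\aik|$ (friendly terms cancel, enemy terms double), the leftmost point of every disc is $\wij - 2\enemysumi|\aik| > 0$ under~\ref{asmp:WAR_WSI}. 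You instead work directly with the quadratic form, using the signed-Laplacian identity, dropping the nonnegative friendly part, and bounding the enemy part via $(a-b)^2 \le 2a^2 + 2b^2$ plus symmetry of $\adjmat$. Both routes land on exactly the same per-agent, per-topic threshold, which explains why the factor $2$ in~\ref{asmp:WAR_WSI} appears in each. The Gerschgorin route is shorter; your energy argument is more self-contained and has the bonus of producing an explicit coercivity constant $\mu = \min_{i\in\agt}\big(\min_{j\in\tp}\wij - 2\enemysumi|\aik|\big) > 0$, so you actually get strong (not merely strict) monotonicity and, via your Lyapunov computation $\dot V \le -\mu\|\boldz - \boldz^*\|^2$, exponential convergence --- a slightly stronger conclusion than the ``strictly global monotone attractor'' property claimed in the theorem. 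Your Steps 2 and 3 then coincide with the paper's appeal to~\cite[Theorems 2.2 and 3.6]{1996_AN_DZ_book} together with Lemma~\ref{lem:eqpt_exist}.
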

\begin{proof}
  We will show that under Assumption~\ref{asmp:WAR_WSI}, $\boldf(\boldz)$ is strictly monotone\footnote{We refer the readers to~\cite{1996_AN_DZ_book} for the definitions and further details.\label{footnote:monotonicity_and_stability_defs}} on $\constraintset$. Firstly, note that the Jacobian matrix $\boldJ$ defined in \eqref{eq:jacobian_mat} is symmetric. Thus, $\boldf(\boldz)$ is strictly monotone on $\constraintset$ if and only if $\boldJ$ has all positive eigenvalues. For any fixed $i\in \agt$ and $j \in \tp$, the center and radius of the corresponding Gerschgorin discs of $\boldJ$ are given by $\wij + \agtsumknoti \aik$ and $\agtsumknoti |\aik|$, respectively.  Using the Gerschgorin disc theorem, it can now be easily verified that the matrix $\boldJ$ is positive definite under Assumption~\ref{asmp:WAR_WSI}. The rest of the claim now follows from Lemma~\ref{lem:eqpt_exist},~\cite[Theorem 2.2]{1996_AN_DZ_book} and~\cite[Theorem 3.6]{1996_AN_DZ_book}.

\end{proof}

\section{Opinion Formation Game}
\label{sec:game_theoretic_analysis}
	
	In this section, we delve deeper into the properties of the underlying opinion formation game.
	Recall that each agent $i \in \agt$ is interested in maximizing its utility
	$U_i$ given in \eqref{eq:utility}, by suitably choosing its opinion
	$\boldzi \in \constraintseti$. This framework defines a \emph{strategic form game} $
		\mathcal{G}:=\langle \agt, \left(\constraintseti\right)_{i \in \agt}, 
	\left(U_{i}\right)_{i \in \agt}\rangle
	$
	among the set of agents
	$\agt$, with the \emph{strategy} of agent $i \in \agt$ being its opinion
	$\boldzi \in \constraintseti$ and its utility function being $U_i$.
	
	For this game, we are interested in presenting the properties of the Nash equilibria and relating it to the equilibrium set of the opinion dynamics. Note that, by definition, for a Nash equilibrium $\boldz^*$ (if one exists), $\boldzi^*$ is agent $i$'s best response over all opinions $\boldzi \in \constraintseti$ to $\boldznoti^* \in \constraintsetnoti$, the Nash equilibrium opinions of all the other agents. The set of \emph{Nash equilibria} of the game $\mathcal{G}$ is
	\begin{align}
		\notag \ne & \ldef \{ \boldz^* \in \constraintset \,\,|\,\, \forall i \in \agt, \\
&\quad U_i(\boldzi^*,\boldznoti^*) \geq U_i(\boldzi,\boldznoti^*), \forall \boldzi \in \constraintseti \}\,. \label{eq:ne}
	\end{align}

We start by relating the equilibria of the opinion dynamics with the Nash equilibria of the opinion formation game. 
	
	\begin{theorem}
          \thmtitle{Relation between $\eqpt$ and $\ne$} Consider the equilibrium points $\eqpt$ in~\eqref{eq:eqpt} of the dynamics $\pds$ in~\eqref{eq:projected_dynamics} and the set of Nash equilibria $\ne$ in~\eqref{eq:ne} of the game $\mathcal{G}$. Then $\ne \subseteq \eqpt$. 
		 
		Moreover, if Assumption~\ref{asmp:WAR} holds, then $\ne = \eqpt$.
		\label{thm:reln_betn_eqm_and_nash_eqm}
	\end{theorem}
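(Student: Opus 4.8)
The plan is to route both $\ne$ and $\eqpt$ through the agent-specific variational inequalities $\varineqfistar$, and then close the loop with Lemma~\ref{lem:reln_betn_var_ineq} together with the identity $\eqpt=\solvarineqf$ from Lemma~\ref{lem:eqpt_exist}. The pivotal elementary fact is that, by construction of the unconstrained dynamics, $\nabla_{\boldzi}U_i(\boldzi,\boldznoti)=-\boldfi(\boldzi,\boldznoti)$; consequently the (signed) first-order optimality condition of the best-response problem $\max_{\boldzi\in\constraintseti}U_i(\boldzi,\boldznoti^*)$ is exactly the statement $\boldzi^*\in\solvarineqfistar$.

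For the inclusion $\ne\subseteq\eqpt$ I would argue as follows. Let $\boldz^*\in\ne$, fix $i\in\agt$ and $\boldxi\in\constraintseti$. Since $\constraintseti$ is convex, the segment $t\mapsto\boldzi^*+t(\boldxi-\boldzi^*)$, $t\in[0,1]$, stays in $\constraintseti$, and by the Nash property the scalar map $t\mapsto U_i(\boldzi^*+t(\boldxi-\boldzi^*),\boldznoti^*)$ is maximized at $t=0$; hence its right derivative at $0$ is non-positive, i.e. $\langle\nabla_{\boldzi}U_i(\boldzi^*,\boldznoti^*),\boldxi-\boldzi^*\rangle\le 0$, equivalently $\langle\boldfi(\boldzi^*,\boldznoti^*),\boldxi-\boldzi^*\rangle\ge 0$. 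As $i$ and $\boldxi$ are arbitrary, $\boldzi^*\in\solvarineqfistar$ for every $i\in\agt$, so Lemma~\ref{lem:reln_betn_var_ineq} yields $\boldz^*\in\solvarineqf=\eqpt$.

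For the converse under Assumption~\ref{asmp:WAR}, take $\boldz^*\in\eqpt=\solvarineqf$. Lemma~\ref{lem:reln_betn_var_ineq} gives $\boldzi^*\in\solvarineqfistar$ for each $i$, i.e. $\langle\nabla_{\boldzi}U_i(\boldzi^*,\boldznoti^*),\boldxi-\boldzi^*\rangle\le 0$ for all $\boldxi\in\constraintseti$; I then need to promote this first-order condition to global optimality. To do so I would show that $U_i(\cdot,\boldznoti^*)$ is concave on $\realm$: a direct differentiation of~\eqref{eq:utility} shows that the Hessian of $U_i$ with respect to $\boldzi$ is diagonal with $j$th entry $-\big(\wij+\agtsumknoti\aik\big)$ (the cross terms across distinct topics vanish, and $\aii=0$). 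Assumption~\ref{asmp:WAR} is precisely the requirement $\min_{j\in\tp}\wij+\agtsumknoti\aik\ge 0$, so this Hessian is negative semi-definite and $U_i(\cdot,\boldznoti^*)$ is concave. For a concave function on the convex set $\constraintseti$, the above first-order condition is sufficient for $\boldzi^*$ to globally maximize $U_i(\cdot,\boldznoti^*)$, i.e. to be agent $i$'s best response to $\boldznoti^*$; applying this to every $i\in\agt$ gives $\boldz^*\in\ne$, hence $\eqpt\subseteq\ne$ and $\ne=\eqpt$.

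The only step that is not pure routine is pinning down the Hessian of the per-agent utility and recognizing that its negative semi-definiteness is exactly Assumption~\ref{asmp:WAR}; the rest is a standard use of first-order conditions on convex sets together with the VI correspondences already established earlier. As a byproduct, combining $\ne=\eqpt$ with the non-emptiness of $\eqpt$ from Lemma~\ref{lem:eqpt_exist} shows that, under Assumption~\ref{asmp:WAR}, the game $\mathcal{G}$ admits a Nash equilibrium.
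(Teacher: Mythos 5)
Your proposal is correct and follows essentially the same route as the paper: both directions go through the agent-specific VIs, Lemma~\ref{lem:reln_betn_var_ineq}, and Lemma~\ref{lem:eqpt_exist}, with the converse hinging on the same diagonal Hessian computation and concavity of $U_i(\cdot,\boldznoti^*)$ under Assumption~\ref{asmp:WAR}. The only difference is that you spell out the standard first-order optimality arguments (necessity at a maximizer over a convex set, and sufficiency under concavity) that the paper delegates to citations of \cite{1996_AN_DZ_book} and \cite{2003_FF_PJ_book}.
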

	\begin{proof}
Let $\boldz^*=(\boldzi^*,\boldznoti^*) \in \ne$, the set of Nash equilibria. Then, by definition of Nash equilibrium, $\boldzi^* \in \argmax_{\boldzi \in \constraintseti} U_{i}(\boldzi,\boldznoti^*),\:\forall i \in \agt$. Recall that the unconstrained dynamics~\eqref{eq:uncontrained_agent_dynamics} for each agent $i$ is equal to gradient ascent of its utility $U_i$ with respect to $\boldzi$ assuming $\boldznoti^*$ to be fixed; \emph{i.e.,} $\nabla_{\boldzi}U_i(\boldzi,\boldznoti^*) = -\boldfi(\boldzi,\boldznoti^*)$. From~\cite[Proposition 2.3]{1996_AN_DZ_book}, we can conclude that $\boldzi^* \in \solvarineqfistar,\:\forall i \in \agt$. Now, Lemmas~\ref{lem:reln_betn_var_ineq} and~\ref{lem:eqpt_exist} imply that $\boldz^*\in \eqpt$.
		
Next, observe that $\forall i \in \agt$, the Hessian of $U_{i}(\boldzi,\boldznoti)$ in~\eqref{eq:utility} with respect to $\boldzi$ is a diagonal matrix with its diagonal elements being equal to $-(\wij+\agtsumknoti \aik)$. If Assumption~\ref{asmp:WAR} holds, then it is easy to see that this Hessian matrix is negative semidefinite $\forall \boldzi \in \constraintseti$. Hence, $U_{i}(\cdot,\boldznoti)$ is concave on the set $\constraintseti$, $\forall i \in \agt$.  The rest of the claim can now be proved using similar arguments used in proof of~\cite[Proposition 1.4.2]{2003_FF_PJ_book} for concave functions.
	\end{proof}
	
	We conclude this section by showing that the function $W(\cdot)$ defined earlier in~\eqref{eq:potential_function} is an exact potential function\footnote{See~\cite{2016_QDL_YHC_BHS_potential_game_theory_book} for the definition of potential functions and potential games.} of the game $\Mc{G}$. This is also useful in showing that a Nash equilibrium always exists for the opinion formation game.
	
	\begin{theorem}
		\thmtitle{Opinion formation game is an exact potential game}
		Consider the opinion formation game $\mathcal{G}$ and its set of Nash equilibria $\ne$ defined in~\eqref{eq:ne}. Then $\mathcal{G}$ is an exact potential game with the potential function $W(\cdot)$ defined in~\eqref{eq:potential_function}. Consequently, $\argmax_{\boldz \in \constraintset}W(\boldz) \subseteq \ne \neq \varnothing$.
		\label{thm:exact_potential_game}
	\end{theorem}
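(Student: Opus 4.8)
The plan is to exploit the standard characterization of exact potential games for continuously differentiable payoffs (see~\cite{2016_QDL_YHC_BHS_potential_game_theory_book}): the strategic form game $\mathcal{G}$ is an exact potential game with potential $W(\cdot)$ if and only if $\nabla_{\boldzi} U_i(\boldz) = \nabla_{\boldzi} W(\boldz)$ for every $i \in \agt$ and every $\boldz \in \constraintset$. Since both $U_i$ in~\eqref{eq:utility} and $W$ in~\eqref{eq:potential_function} are quadratic, and hence smooth, this reduces the theorem to a single gradient computation. I would therefore fix $i \in \agt$ and differentiate the three pieces of $W(\boldz) = \agtsumk U_k(\boldz) + \tfrac12 \boldz^\top(\laplacian \otimes \identitym)\boldz$ with respect to $\boldzi$, term by term.

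For the $i$-th summand, $\nabla_{\boldzi} U_i(\boldz) = -\boldDi(\boldzi - \boldpi) + \agtsumk \aik(\boldzk - \boldzi) = -\boldfi(\boldzi,\boldznoti)$ by~\eqref{eq:uncontrained_agent_dynamics}. For $k \neq i$, the only part of $U_k$ that depends on $\boldzi$ is the social term $-\tfrac{\aki}{2}\|\boldzi - \boldzk\|^2$, so $\nabla_{\boldzi} U_k(\boldz) = \aki(\boldzk - \boldzi) = \aik(\boldzk - \boldzi)$, where the last equality uses the symmetry $\adjmat = \adjmat^\top$ from~\ref{asmp:undirected_graph}; summing over $k \neq i$ and recalling $\aii = 0$ gives $\agtsumk \aik(\boldzk - \boldzi)$. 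Finally, since $\laplacian$ is symmetric, the $i$-th block of $\nabla_{\boldz}\big(\tfrac12 \boldz^\top(\laplacian \otimes \identitym)\boldz\big)$ is $\sum_{k} l_{ik}\boldzk = \big(\agtsumknoti \aik\big)\boldzi - \agtsumknoti \aik\boldzk = -\agtsumk \aik(\boldzk - \boldzi)$. Adding the three contributions, the two identical copies of the social sum $\agtsumk \aik(\boldzk - \boldzi)$ produced by $\agtsumk U_k$ are halved by the Laplacian quadratic term, leaving $\nabla_{\boldzi} W(\boldz) = -\boldDi(\boldzi - \boldpi) + \agtsumk \aik(\boldzk - \boldzi) = \nabla_{\boldzi} U_i(\boldz)$, as required.

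To finish, from this gradient identity the map $\boldzi \mapsto U_i(\boldzi,\boldznoti) - W(\boldzi,\boldznoti)$ has vanishing gradient on the convex set $\constraintseti$ for each fixed $\boldznoti$, hence is constant there; this is exactly the exact potential condition $U_i(\boldzi,\boldznoti) - U_i(\boldzi',\boldznoti) = W(\boldzi,\boldznoti) - W(\boldzi',\boldznoti)$ for all $\boldzi,\boldzi' \in \constraintseti$, $\boldznoti \in \constraintsetnoti$, $i \in \agt$. For the consequence, $\constraintset = \bigtimes_{i \in \agt} \constraintseti$ is compact and $W$ is continuous, so $\argmax_{\boldz \in \constraintset} W(\boldz) \neq \varnothing$; and if $\boldz^* \in \argmax_{\boldz \in \constraintset} W(\boldz)$, then for each $i$ the point $\boldzi^*$ maximizes $W(\cdot,\boldznoti^*)$ over $\constraintseti$, hence, by the exact potential property just proved, $\boldzi^*$ maximizes $U_i(\cdot,\boldznoti^*)$ over $\constraintseti$, i.e., $\boldz^* \in \ne$. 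Thus $\argmax_{\boldz \in \constraintset} W(\boldz) \subseteq \ne \neq \varnothing$.

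The only delicate point is the bookkeeping in the gradient computation: recognizing that the quadratic form $\tfrac12 \boldz^\top(\laplacian \otimes \identitym)\boldz$ is engineered precisely to cancel the double-counting of the pairwise social interactions in $\agtsumk U_k$, and that this cancellation hinges on the undirectedness $\adjmat = \adjmat^\top$. Everything else is routine.
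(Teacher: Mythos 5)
Your proposal is correct and follows essentially the same route as the paper: it verifies that $W$ is an exact potential for $\mathcal{G}$ (the paper checks the difference identity directly, you check the equivalent gradient condition $\nabla_{\boldzi}U_i = \nabla_{\boldzi}W$, and your bookkeeping of the cancellation between $\agtsumk U_k$ and the Laplacian quadratic term is accurate), and then obtains $\varnothing \neq \argmax_{\boldz\in\constraintset}W(\boldz)\subseteq\ne$ from compactness and continuity exactly as the paper does via the cited reference.
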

	
	\begin{proof}
          For any $i \in \agt$, it can be easily verified from~\eqref{eq:utility} and~\eqref{eq:potential_function} that $\forall \boldx_i,\boldy_i \in \constraintseti$ and $\forall \boldznoti \in \constraintsetnoti$,
		\begin{equation*}
                  W(\boldx_i,\boldznoti)-W(\boldy_i,\boldznoti) = U_i(\boldx_i,\boldznoti)-U_i(\boldy_i,\boldznoti).
                \end{equation*}
Thus, $\Mc{G}$ is an exact potential game with the potential function $W(\boldz)$. The existence of a Nash equilibrium is a consequence of the continuity of $W$ and the compactness of the strategy set $\constraintset$, which implies that the potential function attains a maximum in the set $\constraintset$
		~\cite{2016_QDL_YHC_BHS_potential_game_theory_book}.
	\end{proof}
	
	It is important to note that in an exact potential game with a potential function $W(\boldz)$, $\ne$ is not always equal to $\argmax_{\boldz \in \constraintset}W(\boldz)$. In general $\argmax_{\boldz \in \constraintset}W(\boldz) \subseteq \ne$. The next result, which is a consequence of Theorems~\ref{thm:inv_jacobian_properties}, \ref{thm:reln_betn_eqm_and_nash_eqm}, and~\ref{thm:exact_potential_game} equates the two sets when the Jacobian $\boldJ$ is positive semi-definite. We skip a formal proof.
	
	\begin{corollary}
		\label{cor:equality_of_eqm_sets_and_max_of_pot_fn}
		Suppose $\boldJ$ is positive semi-definite. Then,
		$\ne = \eqpt =\argmax_{\boldz \in \constraintset}W(\boldz) \,$.
		\proofend
	\end{corollary}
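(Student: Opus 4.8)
The plan is to close a short cycle of set inclusions among $\ne$, $\eqpt$, and $\argmax_{\boldz \in \constraintset} W(\boldz)$, drawing exactly one inclusion from each of the three results cited in the statement. Concretely, I would establish
\[
\argmax_{\boldz \in \constraintset} W(\boldz) \subseteq \ne \subseteq \eqpt \subseteq \argmax_{\boldz \in \constraintset} W(\boldz),
\]
after which every inclusion is forced to be an equality and the claim follows.

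First I would invoke Theorem~\ref{thm:exact_potential_game}: since $\mathcal{G}$ is an exact potential game with potential $W(\cdot)$, every maximizer of $W$ over $\constraintset$ is a Nash equilibrium, i.e. $\argmax_{\boldz \in \constraintset} W(\boldz) \subseteq \ne$, and moreover this set is nonempty. This step is unconditional; it does not use positive semi-definiteness of $\boldJ$. Next I would use the first assertion of Theorem~\ref{thm:reln_betn_eqm_and_nash_eqm}, also unconditional, which gives $\ne \subseteq \eqpt$. Finally, the hypothesis that $\boldJ$ is positive semi-definite enters only through Claim~\ref{lem:convexity_of_eqm_set} of Theorem~\ref{thm:inv_jacobian_properties}, which states precisely that under this hypothesis $\eqpt = \argmax_{\boldz \in \constraintset} W(\boldz)$; in particular $\eqpt \subseteq \argmax_{\boldz \in \constraintset} W(\boldz)$. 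Chaining the three inclusions closes the loop, so $\ne = \eqpt = \argmax_{\boldz \in \constraintset} W(\boldz)$.

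There is no substantive obstacle here; the corollary is essentially bookkeeping on top of Theorems~\ref{thm:inv_jacobian_properties}, \ref{thm:reln_betn_eqm_and_nash_eqm}, and~\ref{thm:exact_potential_game}. The only point requiring care is to track which ingredients need the positive semi-definiteness of $\boldJ$: only the identity $\eqpt = \argmax_{\boldz \in \constraintset} W(\boldz)$ does, whereas the potential-game inclusion and $\ne \subseteq \eqpt$ hold in general. This is what lets the chain close under the single stated hypothesis, without additionally invoking Assumption~\ref{asmp:WAR} — which was required in Theorem~\ref{thm:reln_betn_eqm_and_nash_eqm} only for the reverse inclusion $\eqpt \subseteq \ne$, a fact now recovered for free from positive semi-definiteness of $\boldJ$.
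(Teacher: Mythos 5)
Your proof is correct and matches the paper's intent exactly: the authors skip a formal proof, stating only that the corollary is a consequence of Theorems~\ref{thm:inv_jacobian_properties}, \ref{thm:reln_betn_eqm_and_nash_eqm}, and~\ref{thm:exact_potential_game}, and your chain of inclusions $\argmax_{\boldz \in \constraintset} W(\boldz) \subseteq \ne \subseteq \eqpt \subseteq \argmax_{\boldz \in \constraintset} W(\boldz)$ is precisely the bookkeeping they had in mind. Your observation about which step actually uses positive semi-definiteness of $\boldJ$ is also accurate.
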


	\section{When there are no antagonistic relations}
	\label{sec:analysis_in_absence_of_antagonists}
	Here, we analyze the properties of the unique Nash equilibrium of the opinion formation game in the absence of antagonistic relations; \emph{i.e.,} when Assumption~\ref{asmp:no_antagonistic_relations} holds. Since Assumption~\ref{asmp:no_antagonistic_relations} is a special case of Assumptions~\ref{asmp:WAR_WSI} and~\ref{asmp:WAR}, $\eqpt=\ne$ and the solution of~\eqref{eq:projected_dynamics} converges to the unique equilibrium point $\boldz^* \in \eqpt$ starting from any $\boldz(0) \in \constraintset$. By definition, we know that $\boldz^*=(\boldzi^*,\boldznoti^*) \in \ne$ if and only if $\boldzi^* \in \argmax_{\boldzi \in \constraintseti} U_{i}(\boldzi,\boldznoti^*),\:\forall i \in \agt$. For any agent $i \in \agt$, using~\eqref{eq:utility} let us re-write the function $U_i(\boldzi,\boldznoti^*)$ as
	\begin{equation}
		U_i(\boldzi,\boldznoti^*) = -0.5 \left\|\boldzi-\tildepi(\boldznoti^*) \right\|_{\tildeDi}^{2} - \Delta_{i}(\boldznoti^*),
		\label{eq:utility_of_i_given_strategies_of_others}
	\end{equation}
	where $\tildeDi:=\big(\boldDi + \agtsumknoti\aik \identitym\big) \in \real^{m \times m}_{\geq 0}$ is a positive diagonal matrix, $\tildepi(\boldznoti^*):=\tildeDi^{-1}\big(\boldDi\boldpi + \agtsumknoti\aik \boldzk^*\big) \in \realm_{\geq 0}$ and $2\Delta_{i}(\boldznoti^*):= \boldpi^\top\boldDi\boldpi+\agtsumknoti\aik\boldzk^{*\top}\boldzk^*-\tildepi(\boldznoti^*)^\top \tildeDi \tildepi(\boldznoti^*) \in \real$. Henceforth, we omit the arguments of $\tildepi(\boldznoti^*)$ and $\Delta_i(\boldznoti^*) $ for brevity.
	
Now, consider a particular agent $i \in \agt$. Note that $\tildepi$ and $\Delta_i$ are independent of $i$'s own opinion $\boldzi$ and depends solely on its neighbors' opinions $\boldzk^*,\: k \in \Ni$. Hence, if $\boldznoti^* \in \constraintsetnoti$ denotes the Nash equilibrium opinions of other agents, then from the above discussion, we can see that the Nash equilibrium opinion $\boldzi^*$ of agent $i$ is the unique solution to the quadratic program $\min_{\boldzi \in \constraintseti} 0.5 \left\|\boldzi -\tildepi \right\|_{\tildeDi}^{2}$.  Thus, for each agent $i \in \agt$, its unique Nash equilibrium opinion vector $\boldzi^* \in \constraintseti$ is the weighted projection of the \emph{neighbor influenced preference} vector $\tildepi$ onto the compact, convex set $\constraintseti$. Moreover, for any Nash equilibrium $\boldz^* \in \ne$, we can partition the entire set of agents $\agt$ into two disjoint subsets,
	\begin{subequations}
	\vspace{-1em}
	\begin{align}
		\label{eq:agent_define_exhaust} & \agte(\boldz^*) \ldef \{i\in \agt \mid \boldci^\top \boldzi^* = B_i\}, \\ 
		\label{eq:agent_define_not_exhaust} & \agtde(\boldz^*) \ldef \{i\in \agt \mid \boldci^\top \boldzi^* < B_i\}\,. 
	\end{align}
	\label{eq:agent_define_exhaust_not_exhaust}
	\end{subequations}		
Here, $\agte(\boldz^*)$ is the set of all agents who fully exhaust their resources at Nash equilibrium and $\agtde(\boldz^*) = \agt \setminus \agte(\boldz^*)$ is the set of agents who do not. Viewing each agent's Nash equilibrium opinion $ \boldzi^* $ as the solution to the projection problem provides insight into its properties by partitioning agents into the aforementioned sets $ \agte(\boldz^*) $ and $ \agtde(\boldz^*) $.

We first state the conditions that the Nash equilibrium opinions of agents in these sets must satisfy and then in the remark that follows, discuss the implications.
	\begin{lemma}
		\thmtitle{Necessary conditions on opinions of agents in $\agte$ and $\agtde$}
		Consider the equilibrium points $\eqpt$ in~\eqref{eq:eqpt} of the dynamics $\pds$ in~\eqref{eq:projected_dynamics} and the set of Nash equilibria $\ne$ in~\eqref{eq:ne} of the game $\mathcal{G}$. Suppose Assumption~\ref{asmp:no_antagonistic_relations} holds. Let $\boldz^* \in \ne=\eqpt$ denote the unique Nash equilibrium of $\mathcal{G}$, and let $\agte(\boldz^*)$ and $\agtde(\boldz^*)$ be as defined in~\eqref{eq:agent_define_exhaust_not_exhaust}. 
Further, for any $i \in \agt$ and $s \in \tp$, define $\tilde{w}_{i}^{s}\ldef (w_{i}^{s} + \agtsumknoti \aik)$ and let $(z_{k}^{s})^*$ (resp. $\tilde{p}_{k}^{s}$) denote the $s\tth$ element of $\boldzk^*$ (resp. $\tildepk(\boldznotk^*)$).
Then, for some $\lambda^* \leq 0$,
\begin{subequations}
	\begin{align}
		\label{eq:agent_do_not_exhaust_budget} & \hspace{-1ex} i \in \agtde(\boldz^*) \hspace{-1ex} \implies \hspace{-1ex} \boldzi^* = \tildepi(\boldznoti^*)\,,\\
		\label{eq:agent_exhausts_budget} & \hspace{-1ex} i \in \agte(\boldz^*) \hspace{-1ex} \implies \hspace{-1ex} \frac{\tilde{w}_{i}^{s}	[(z_{i}^{s})^* - \tilde{p}_{i}^{s}]}{c_{i}^{s}} = \lambda^*, \forall s \in \support{\boldz_{i}^*}.
	\end{align}
\end{subequations}

		\label{lem:conditions_on_eqm_opinions_for_agent_to_be_in_ve_vde}
	\end{lemma}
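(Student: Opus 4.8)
The plan is to identify the claimed relations as the Karush--Kuhn--Tucker (KKT) conditions of a single convex quadratic program per agent, using the weighted-projection reformulation established just before the statement. Under Assumption~\ref{asmp:no_antagonistic_relations} we already know (from Theorems~\ref{thm:reln_betn_eqm_and_nash_eqm},~\ref{thm:convg_to_uniq_eqm},~\ref{thm:exact_potential_game} and the discussion following~\eqref{eq:utility_of_i_given_strategies_of_others}) that $\ne=\eqpt=\{\boldz^*\}$ and that, for each $i\in\agt$, its block $\boldzi^*$ is the unique minimizer over $\constraintseti$ of $g_i(\boldzi)\ldef 0.5\,\|\boldzi-\tildepi\|_{\tildeDi}^2$, i.e.\ the $\tildeDi$-weighted projection of $\tildepi$ onto $\constraintseti$. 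Since $\tildeDi$ is diagonal with positive entries $\tilde{w}_{i}^{s}=w_{i}^{s}+\agtsumknoti\aik\ge w_{i}^{s}>0$ (using $\aik\ge 0$ under~\ref{asmp:no_antagonistic_relations}), $g_i$ is convex and differentiable; and all constraints defining $\constraintseti$ in~\eqref{eq:constraint_set} are affine, so the KKT conditions hold at $\boldzi^*$ and are, by convexity of $g_i$, sufficient as well.

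First I would write out the KKT system for this projection. Introducing a multiplier $\lambda_i\ge 0$ for $\ci^\top\boldzi\le\Bi$ and a vector multiplier $\boldmu_i\in\realm_{\geq 0}$ for $\boldzi\ge\boldzero$, stationarity of the Lagrangian of $g_i$ reads $\tildeDi(\boldzi^*-\tildepi)+\lambda_i\,\ci-\boldmu_i=\boldzero$, together with the complementary slackness relations $\lambda_i(\ci^\top\boldzi^*-\Bi)=0$ and $\mu_i^s(z_i^s)^*=0$ for all $s\in\tp$. A small but essential preliminary is the sign fact $\tildepi\ge\boldzero$: from its definition $\tildepi=\tildeDi^{-1}\big(\boldDi\boldpi+\agtsumknoti\aik\boldzk^*\big)$, where $\tildeDi^{-1}$ is diagonal with positive entries, $\boldDi\boldpi\ge\boldzero$ by~\ref{asmp:param_sign}, and each $\aik\boldzk^*\ge\boldzero$ since $\aik\ge 0$ under~\ref{asmp:no_antagonistic_relations} and $\boldzk^*\in\constraintset_k\subseteq\realm_{\geq 0}$.

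Next I would split into the two cases. If $i\in\agtde(\boldz^*)$, then $\ci^\top\boldzi^*<\Bi$ forces $\lambda_i=0$, so stationarity reduces to $\tildeDi(\boldzi^*-\tildepi)=\boldmu_i\ge\boldzero$; for $s\in\support{\boldzi^*}$ complementary slackness gives $\mu_i^s=0$, hence $(z_i^s)^*=\tilde{p}_{i}^{s}$ (dividing by $\tilde{w}_{i}^{s}>0$), while for $s\notin\support{\boldzi^*}$ we have $(z_i^s)^*=0$ and then $\mu_i^s=-\tilde{w}_{i}^{s}\tilde{p}_{i}^{s}\ge 0$, which together with $\tilde{p}_{i}^{s}\ge 0$ forces $\tilde{p}_{i}^{s}=0=(z_i^s)^*$; altogether $\boldzi^*=\tildepi$, i.e.~\eqref{eq:agent_do_not_exhaust_budget}. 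If instead $i\in\agte(\boldz^*)$, then for each $s\in\support{\boldzi^*}$ we again have $\mu_i^s=0$, and the $s$-th scalar stationarity equation $\tilde{w}_{i}^{s}\big[(z_i^s)^*-\tilde{p}_{i}^{s}\big]+\lambda_i c_i^s=0$ rearranges to $\tilde{w}_{i}^{s}\big[(z_i^s)^*-\tilde{p}_{i}^{s}\big]/c_i^s=-\lambda_i$; setting $\lambda^*\ldef-\lambda_i\le 0$ (one scalar per agent, the same for all $s\in\support{\boldzi^*}$) yields~\eqref{eq:agent_exhausts_budget}.

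This is essentially a bookkeeping exercise once the per-agent weighted-projection reformulation is available, so I do not expect a genuine obstacle; the only points requiring care are the justification for invoking KKT (affineness of the constraints for necessity, convexity of $g_i$ for sufficiency) and the two sign facts $\tildepi\ge\boldzero$ and $\lambda_i\ge 0$, which are exactly what make the conclusions come out with the stated orientation---$(z_i^s)^*=\tilde{p}_{i}^{s}$ on $\agtde(\boldz^*)$ and $\lambda^*\le 0$ on $\agte(\boldz^*)$. I would also flag that the $\lambda^*$ appearing in~\eqref{eq:agent_exhausts_budget} is agent-specific---it is minus the budget multiplier of that particular agent---rather than a single constant shared across all agents in $\agte(\boldz^*)$.
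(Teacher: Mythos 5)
Your proposal is correct and follows essentially the same route as the paper: reduce each agent's Nash opinion to the $\tildeDi$-weighted projection of $\tildepi\ge\boldzero$ onto $\constraintseti$ and read off the two implications from the KKT conditions. The only (cosmetic) difference is that the paper disposes of the $\agtde(\boldz^*)$ case geometrically ($\ci^\top\tildepi<B_i$ and $\tildepi\ge\boldzero$ make $\tildepi$ feasible, hence its own projection) and writes KKT only for the equality-constrained problem in the $\agte(\boldz^*)$ case, whereas you run a single inequality-form KKT system through both cases; your sign bookkeeping ($\lambda^*=-\lambda_i\le 0$, and $\tilde p_i^s=0$ off the support) and the remark that $\lambda^*$ is agent-specific are consistent with the paper.
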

	\begin{proof}
		Under the stated assumptions, if $\boldz^* \in \ne$, then by definition $\tildepi \geq 0,\:\forall i \in \agt$. Also, recall that $\boldzi^* = \argmin_{\boldzi \in \constraintseti} 0.5 \left\|\boldzi -\tildepi \right\|_{\tildeDi}^{2},\:\forall i \in \agt$. This means each $\boldzi^*$ is the unique weighted projection of $\tildepi$ onto closed, convex set $\constraintseti$. It is then easy to see that $i \in \agtde(\boldz^*)$ if and only if $\ci^\top \tildepi < B_{i}$. Since $\tildepi \geq \boldzero$ it means that $\tildepi \in \constraintseti$ and hence $\boldzi^*= \tildepi$. This proves the equality in~\eqref{eq:agent_do_not_exhaust_budget}. Similarly, we can show that $i \in \agte(\boldz^*)$ if and only if $\ci^\top \tildepi \geq B_{i}$. This means that $\boldzi^*$ is the solution to the projection problem only if $\ci^\top \boldzi^* = B_i$. Hence, $\boldzi^* = \argmin_{ \{\boldzi \mid \boldzi \geq \boldzero,\; \ci^\top \boldzi = B_i\} } 0.5 \left\|\boldzi -\tildepi \right\|_{\tildeDi}^{2}$. From the KKT conditions for this (strict) convex program we have,
		\begin{subequations}
			\vspace{-1em}
			\label{eq:DAG_kkt}
			\begin{equation}
				\tildeDi(\boldzi^* - \tildepi) - \lambda^*\ci - \boldsymbol{\mu}^* = \boldzero\;,
				\label{eq:Lagrange}
			\end{equation}
			\begin{equation}
				\boldsymbol{\mu}^{*\top} \boldzi^* = 0
				\label{eq:compl_slackness},
				\boldsymbol{\mu}^* \geq \boldzero,\:\boldzi^* \in \{\boldzi | \boldzi \geq \boldzero, \ci^\top \boldzi = B_i\}.
			\end{equation}
			\label{eq:kkt}
		\end{subequations}
		Then~\eqref{eq:agent_exhausts_budget} follows from~\eqref{eq:kkt}. This completes the proof. 
	\end{proof}	
	
		\begin{remark}
		\thmtitle{Properties of Nash equilibrium in absence of antagonistic relations} 
 Lemma~\ref{lem:conditions_on_eqm_opinions_for_agent_to_be_in_ve_vde} gives us the following insight. From~\eqref{eq:agent_do_not_exhaust_budget} and by the definition of $\tildepi$, we observe that for any agent $i \in \agtde(\boldz^*)$, its equilibrium opinion $(\zij)^*$ on any topic $j$ is equal to a convex combination of $\dij$ and its neighbors' equilibrium opinions $(\zkj)^*$ on that topic. 
Thus, if an agent does not exhaust its budget, it settles for an opinion that is a weighted average of its neighbors steady-state opinions and its internal preference. 

Now, when an agent exhausts its resources, \emph{i.e.} $i \in \agte = \agt \setminus \agtde(\boldz^*)$, Lemma~\ref{lem:conditions_on_eqm_opinions_for_agent_to_be_in_ve_vde} is useful for commenting on the closeness of its equilibrium opinion from its neighbor influenced preference vector. In fact, from~\eqref{eq:agent_exhausts_budget}, we note that for any agent $i \in \agte(\boldz^*)$ such that $\boldzi^* \neq \tildepi$, $\tilde{w}_{i}^{s}/ c_{i}^{s} > \tilde{w}_{i}^{l}/ c_{i}^{l}$ if and only if $ \left|(z_{i}^{s})^* - \tilde{p}_{i}^{s}\right| < \left|(z_{i}^{l})^* - \tilde{p}_{i}^{l}\right|, l,s \in \support{\boldz_i^*}$. This implies that agent $i$'s equilibrium opinion $(z_{i}^{s})^*$ is closest to the neighbor-influenced preference $\tilde{p}_{i}^{s}$ for the topic in $\support{\boldz_i^*}$ with the largest value of $\tilde{w}_{i}^{s}/ c_{i}^{s}$.		
		\remend
	\end{remark}

In light of Lemma~\ref{lem:conditions_on_eqm_opinions_for_agent_to_be_in_ve_vde}, we see the importance of partitioning the agents into $\agte(\boldz^*)$ and $\agtde(\boldz^*)$.	
So, in the remaining portion of the section, 
we give sufficient conditions based solely on model parameters to determine if an agent either fully exhausts or does not exhaust its budget. 
	
\begin{lemma}
  \thmtitle{Tractable sufficient condition for an agent to be in $\agtde$} Consider the equilibria $\eqpt$ in~\eqref{eq:eqpt} of the dynamics $\pds$ in~\eqref{eq:projected_dynamics} and the Nash equilibria $\ne$ in~\eqref{eq:ne} of $\mathcal{G}$. Let Assumption~\ref{asmp:no_antagonistic_relations} hold and let $\boldz^*\in \ne=\eqpt$. Suppose $\exists i \in \agt$ such that $\ci^\top \boldsymbol{\upsilon}_{i} < B_{i}$ where $\boldsymbol{\upsilon}_{i}\ldef[\upsilon_{i}^{1},\ldots,\upsilon_{i}^{m}]$ with $\upsilon_{i}^{s}\ldef (w_{i}^{s} p_i^s + \agtsumknoti \aik \gamma_k B_{k})/\tilde{w}_{i}^{s},\:\forall s \in \tp$ and $\tilde{w}_{i}^{s}$ as defined in Lemma~\ref{lem:conditions_on_eqm_opinions_for_agent_to_be_in_ve_vde} and $\gamma_k \ldef (\min_{l \in \tp} c_k^l )^{-1}>0,\:k \in \agt$. Then, $\ci^\top \boldzi^* < B_i$.
  \label{lem:suff_cond_for_agent_not_exhausting}
\end{lemma}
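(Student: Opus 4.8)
The plan is to reduce the claim to a statement purely about the neighbor-influenced preference vector $\tildepi(\boldznoti^*)$ and then bound that vector using only feasibility of the neighbors' equilibrium opinions, so that no quantity depending on $\boldz^*$ remains. Recall from the discussion preceding Lemma~\ref{lem:conditions_on_eqm_opinions_for_agent_to_be_in_ve_vde} that, under Assumption~\ref{asmp:no_antagonistic_relations}, at the unique Nash equilibrium $\boldz^*$ each $\boldzi^*$ is the weighted projection of $\tildepi(\boldznoti^*) \geq \boldzero$ onto $\constraintseti$, and (as shown in that proof) $i \in \agtde(\boldz^*)$ if and only if $\ci^\top \tildepi(\boldznoti^*) < B_i$, since in that case $\tildepi(\boldznoti^*)$ is strictly interior-feasible and hence equals its own projection. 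Consequently, it suffices to show $\ci^\top \tildepi(\boldznoti^*) < B_i$ under the stated hypothesis; this is exactly the content of $\ci^\top \boldzi^* < B_i$ by~\eqref{eq:agent_define_not_exhaust}.

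Next I would derive a parameter-only upper bound on $\tildepi(\boldznoti^*)$ componentwise. For any neighbor $k$, feasibility $\boldzk^* \in \mathcal{K}_k$ gives $\boldzk^* \geq \boldzero$ and $\ck^\top \boldzk^* \leq B_k$; together with $\ck > \boldzero$ (Standing Assumption~\ref{asmp:param_sign}) this yields $c_k^s (z_k^s)^* \leq \ck^\top \boldzk^* \leq B_k$, hence $(z_k^s)^* \leq B_k / c_k^s \leq \gamma_k B_k$ for every $s \in \tp$, where $\gamma_k = (\min_{l \in \tp} c_k^l)^{-1} > 0$. This is the point where Assumption~\ref{asmp:no_antagonistic_relations} is essential: because $\aik \geq 0$, the bound $(z_k^s)^* \leq \gamma_k B_k$ propagates to $\aik (z_k^s)^* \leq \aik \gamma_k B_k$; with a negative $\aik$ the inequality would reverse and the argument would break.

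Then, using $\tilde{p}_i^s = (w_i^s p_i^s + \agtsumknoti \aik (z_k^s)^*)/\tilde{w}_i^s$ together with $\tilde{w}_i^s > 0$ (valid since $\tildeDi \succ 0$ under Assumption~\ref{asmp:no_antagonistic_relations}), the previous bound gives $\tilde{p}_i^s \leq \upsilon_i^s$ for every $s$. Since $\ci > \boldzero$, summing against $c_i^s$ yields $\ci^\top \tildepi(\boldznoti^*) \leq \ci^\top \boldsymbol{\upsilon}_i$. Combining with the hypothesis $\ci^\top \boldsymbol{\upsilon}_i < B_i$ gives $\ci^\top \tildepi(\boldznoti^*) < B_i$, whence $i \in \agtde(\boldz^*)$, i.e., $\ci^\top \boldzi^* < B_i$, completing the proof.

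The argument is essentially a feasibility/monotonicity bound and has no substantive technical obstacle; the only care needed is (i) correctly recovering the equivalence ``$i \in \agtde(\boldz^*) \iff \ci^\top \tildepi(\boldznoti^*) < B_i$'' from the weighted-projection characterization established before Lemma~\ref{lem:conditions_on_eqm_opinions_for_agent_to_be_in_ve_vde}, and (ii) tracking the sign conditions $\aik \geq 0$, $\ci, \ck > \boldzero$, and $\tilde{w}_i^s > 0$ so that every componentwise inequality points in the right direction.
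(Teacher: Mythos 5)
Your proof is correct and follows essentially the same route as the paper's: reduce to showing $\ci^\top \tildepi(\boldznoti^*) < B_i$ via the equivalence established in the proof of Lemma~\ref{lem:conditions_on_eqm_opinions_for_agent_to_be_in_ve_vde}, then bound $\tildepi \leq \boldsymbol{\upsilon}_i$ using only feasibility $\boldzk^* \in \constraintset_k$ of the neighbors and nonnegativity of the $\aik$. You merely spell out the componentwise inequalities that the paper leaves as ``it is clear.''
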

	\begin{proof}
		Under the stated assumptions and the definitions of $\boldsymbol{\upsilon}_{i}$ and $\tildepi$, it is clear that $\tildepi \leq \boldsymbol{\upsilon}_{i}$. This follows because the equilibrium opinion profiles of neighboring agents, regardless of their exact values, satisfy $\boldzl^* \in \constraintsetl,\:\forall l \in \Ni$. The claim now follows from the proof of Lemma~\ref{lem:conditions_on_eqm_opinions_for_agent_to_be_in_ve_vde}.
	\end{proof}
	
	Note that $\boldsymbol{\upsilon}_i$ depends on the importance weights $\wij$, influence weights $\aik$, and the budgets of neighboring agents $B_k$. So, if there exists an $i \in \agt$ with a large enough budget $B_i$ such that $\mathbf{c}_i^\top\boldsymbol{\upsilon}_i < B_i$, then by Lemma~\ref{lem:suff_cond_for_agent_not_exhausting}, the agent will not exhaust its budget at equilibrium.
	
	Finally, we provide a sufficient condition, again based solely on model parameters, to guarantee that an agent exhausts its budget at the Nash equilibrium. Before proceeding, we state and prove this short result next.

	\begin{lemma}
	Suppose Assumption~\ref{asmp:no_antagonistic_relations} holds. Let $\boldq^* := \boldJ^{-1}\boldD\boldp$. Then $\boldq^*\geq \boldzero$.
		\label{lem:non_negativity_of_unconst_opinions}
	\end{lemma}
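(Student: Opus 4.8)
The plan is to show that $\boldJ$ is a symmetric positive definite Z-matrix — a Stieltjes matrix — so that $\boldJ^{-1}$ is entrywise nonnegative; combined with $\boldD\boldp\geq\boldzero$, this yields $\boldq^* = \boldJ^{-1}\boldD\boldp\geq\boldzero$ at once.

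First I would record the sign structure of $\boldJ = \boldD + \laplacian\otimes\identitym$ under Assumption~\ref{asmp:no_antagonistic_relations}. Since there are no antagonistic relations, $\aik\geq 0$ for all $i,k\in\agt$, so $\laplacian$ is the ordinary weighted graph Laplacian: $l_{ii}=\agtsumknoti\aik\geq 0$ and $l_{ij}=-\aij\leq 0$ for $j\neq i$. Kronecker multiplication by $\identitym$ preserves this pattern, and adding the diagonal matrix $\boldD=\mathrm{blkdiag}(\boldD_{1},\ldots,\boldD_{n})$ leaves every off-diagonal entry of $\boldJ$ nonpositive while keeping each diagonal entry strictly positive (as $\wij>0$ by Standing Assumption~\ref{asmp:param_sign}). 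Hence $\boldJ$ is a Z-matrix with positive diagonal.

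Next I would invoke positive definiteness. By Standing Assumption~\ref{asmp:param_sign}, $\boldD$ is diagonal with strictly positive entries, hence positive definite; and under Assumption~\ref{asmp:no_antagonistic_relations}, $\laplacian$ is a genuine graph Laplacian of the connected social graph $\grph$, hence symmetric positive semidefinite, and so is $\laplacian\otimes\identitym$. Therefore $\boldJ$ is symmetric positive definite, in particular nonsingular, and it is a Stieltjes matrix. To see $\boldJ^{-1}\geq\boldzero$ in a self-contained way, write $\boldJ = s\identity - \mathbf{N}$ with $s:=\max_{r}\boldJ_{rr}>0$ (the largest diagonal entry of $\boldJ$). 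By the Z-structure, $\mathbf{N}=s\identity-\boldJ\geq\boldzero$ entrywise; and since $\boldJ$ is symmetric positive definite, the largest eigenvalue of the symmetric nonnegative matrix $\mathbf{N}$ is $s-\lambda_{\min}(\boldJ)<s$, so $\rho(\mathbf{N})<s$. Consequently $\boldJ^{-1} = s^{-1}\sum_{k\geq 0}(\mathbf{N}/s)^{k}$, a convergent sum of entrywise nonnegative matrices, whence $\boldJ^{-1}\geq\boldzero$.

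Finally, since $\dij\geq 0$ for all $i\in\agt$, $j\in\tp$ by Standing Assumption~\ref{asmp:param_sign}, and $\boldD$ is a nonnegative diagonal matrix, we have $\boldD\boldp\geq\boldzero$; multiplying by the entrywise nonnegative $\boldJ^{-1}$ gives $\boldq^* = \boldJ^{-1}\boldD\boldp\geq\boldzero$. I do not anticipate a substantive obstacle; the one point needing care is that both the Z-matrix sign pattern and the positive semidefiniteness of $\laplacian$ rely on Assumption~\ref{asmp:no_antagonistic_relations} — with antagonistic links the repelling signed Laplacian is in general neither — so this hypothesis is essential and cannot be relaxed.
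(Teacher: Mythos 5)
Your proof is correct and takes essentially the same route as the paper: the paper notes that under Assumption~\ref{asmp:no_antagonistic_relations} the matrix $\boldJ$ is an invertible M-matrix, cites a lemma for $\boldJ^{-1}\geq\boldzero$, and concludes from $\boldD\boldp\geq\boldzero$. The only difference is that you supply a self-contained proof of the inverse-nonnegativity step (identifying $\boldJ$ as a Stieltjes matrix and using the Neumann series with $\rho(\mathbf{N})=s-\lambda_{\min}(\boldJ)<s$) instead of citing it, and that argument is sound.
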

	\begin{proof}
		Under Assumption~\ref{asmp:no_antagonistic_relations} it can be shown that $\boldJ$ is an invertible M-matrix. Hence, its inverse $\boldJ^{-1}$ is a non-negative matrix~\cite[Lemma 7]{2017_AP-RT_ARC}. The claim now follows from Assumption~\ref{asmp:param_sign}.
	\end{proof}
	
	Note that since $\boldJ$ is invertible, $\boldq^*$ defined in Lemma~\ref{lem:non_negativity_of_unconst_opinions} is in fact the unique equilibrium point of the unconstrained dynamics~\eqref{eq:uncontrained_dynamics}. Additionally, let $\boldq_{i}^*$ represent the $i\tth$ sub-vector of $\boldq^*$, corresponding to the \emph{unconstrained equilibrium opinion} vector of $i \in \agt$. We now present the final result. 
	
	\begin{lemma}
		\thmtitle{Tractable sufficient condition for an agent to be in $\agte$}
		Consider the equilibrium points $\eqpt$ in~\eqref{eq:eqpt} of the dynamics $\pds$ in~\eqref{eq:projected_dynamics} and the set of Nash equilibria $\ne$ in~\eqref{eq:ne} of the game $\mathcal{G}$. Let Assumption~\ref{asmp:no_antagonistic_relations} hold and let $\boldz^* \in \ne = \eqpt$. Suppose $\exists i \in \agt$ and $j \in \tp$ such that,
		\begin{align}
			(q_{i}^{j})^* &> \left[\frac{\agtsumknoti \aik (q_{k}^{j})^*}{\agtsumknoti \aik} + \frac{B_i}{c_{i}^{j}}\right],
\label{eq:condn_for_agent_in_ve}
		\end{align}
		where for any $i \in \agt$ and $j \in \tp$, $(q_{i}^{j})^*$ denotes the $j\tth$ element of $\boldq_{i}^*$.
		Then, $\ci^\top \boldzi^* = B_i$.
		\label{lem:suff_cond_for_agent_exhausting}
	\end{lemma}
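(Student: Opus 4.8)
The plan is to reduce the statement to a lower bound on a single coordinate of the \emph{neighbor-influenced preference} vector $\tildepi$. Recall from the proof of Lemma~\ref{lem:conditions_on_eqm_opinions_for_agent_to_be_in_ve_vde} that, under Assumption~\ref{asmp:no_antagonistic_relations} and for $\boldz^* \in \ne$, one has $\tildepi \geq \boldzero$ for every $i$, and agent $i$ satisfies $\boldci^\top\boldzi^* = B_i$ (i.e.\ $i \in \agte(\boldz^*)$) if and only if $\boldci^\top\tildepi \geq B_i$. Since $\boldci > \boldzero$ and $\tildepi \geq \boldzero$, we have $\boldci^\top\tildepi \geq c_i^j\,\tilde{p}_i^j$; hence it suffices to show that the topic $j \in \tp$ singled out in the hypothesis~\eqref{eq:condn_for_agent_in_ve} satisfies $\tilde{p}_i^j \geq B_i/c_i^j$, and the lemma follows.

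The key observation is that $\tilde{p}_i^j$ and the unconstrained equilibrium coordinate $(q_i^j)^*$ solve the \emph{same} neighbor-averaging identity, differing only in which opinions enter the average. From $\tildepi = \tildeDi^{-1}(\boldDi\boldpi + \agtsumknoti \aik \boldzk^*)$ we read off $\tilde{p}_i^j = (w_i^j p_i^j + \agtsumknoti \aik (z_k^j)^*)/\tilde{w}_i^j$, whereas the relation $\boldJ\boldq^* = \boldD\boldp$ used in Lemma~\ref{lem:non_negativity_of_unconst_opinions} gives, coordinate by coordinate, $(q_i^j)^* = (w_i^j p_i^j + \agtsumknoti \aik (q_k^j)^*)/\tilde{w}_i^j$. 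Subtracting these, $\tilde{p}_i^j = (q_i^j)^* - \frac{1}{\tilde{w}_i^j}\agtsumknoti \aik [\,(q_k^j)^* - (z_k^j)^*\,]$.

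It then remains to bound the correction term from above by $(\agtsumknoti \aik (q_k^j)^*)/(\agtsumknoti \aik)$. Using $(z_k^j)^* \geq 0$ (primal feasibility, $\boldzk^* \in \mathcal{K}_k$) and $\aik \geq 0$ (Assumption~\ref{asmp:no_antagonistic_relations}), we get $\agtsumknoti \aik [\,(q_k^j)^* - (z_k^j)^*\,] \leq \agtsumknoti \aik (q_k^j)^*$, and the latter is nonnegative by Lemma~\ref{lem:non_negativity_of_unconst_opinions}. Because $\tilde{w}_i^j = w_i^j + \agtsumknoti \aik > \agtsumknoti \aik > 0$ (here $w_i^j > 0$ by Standing Assumption~\ref{asmp:param_sign}, and $i$ has at least one neighbor since $\grph$ is connected), dividing this nonnegative quantity by $\tilde{w}_i^j$ yields something no larger than dividing it by $\agtsumknoti \aik$. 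Chaining the inequalities gives $\tilde{p}_i^j \geq (q_i^j)^* - (\agtsumknoti \aik (q_k^j)^*)/(\agtsumknoti \aik)$, and by the hypothesis~\eqref{eq:condn_for_agent_in_ve} the right-hand side is strictly greater than $B_i/c_i^j$. Hence $\tilde{p}_i^j > B_i/c_i^j$, which by the reduction in the first paragraph forces $\boldci^\top\boldzi^* = B_i$.

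I do not expect a genuine obstacle; the whole argument is sign bookkeeping around one coordinate. The only delicate point is to keep both denominators, $\tilde{w}_i^j$ and $\agtsumknoti \aik$, strictly positive and all inequalities correctly oriented — which is exactly where Assumption~\ref{asmp:no_antagonistic_relations} ($\aik \geq 0$), the standing assumption $w_i^j > 0$, connectedness of $\grph$, primal feasibility $\boldzk^* \geq \boldzero$, and the nonnegativity $\boldq^* \geq \boldzero$ from Lemma~\ref{lem:non_negativity_of_unconst_opinions} are each invoked.
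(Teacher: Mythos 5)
Your proof is correct, but it takes a genuinely different (though closely related) route from the paper's. The paper argues by contradiction inside the variational inequality: it rewrites $\boldf(\boldz)=\boldJ(\boldz-\boldq^*)$, uses $0\le z_k^j\le B_k/c_k^j$, $\boldq^*\ge\boldzero$ and~\eqref{eq:condn_for_agent_in_ve} to show that the component $f_i^j(\boldz)$ is strictly negative on $\constraintset$, and then, assuming $\ci^\top\boldzi^*<B_i$, constructs a feasible test point that increases only the $j\tth$ coordinate of $\boldzi^*$ and thereby violates $\boldzi^*\in\solvarineqfistar$. You instead invoke the best-response/projection characterization established in the proof of Lemma~\ref{lem:conditions_on_eqm_opinions_for_agent_to_be_in_ve_vde} (namely $i\in\agte(\boldz^*)$ iff $\ci^\top\tildepi\ge B_i$) and lower-bound the single coordinate $\tilde{p}_i^j$ through the identity linking $\tildepi$ and $\boldq^*$. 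The two arguments share the same analytic core: since $f_i^j(\boldzi,\boldznoti)=\tilde{w}_i^j\bigl(z_i^j-\tilde{p}_i^j(\boldznoti)\bigr)$, the paper's claim that $f_i^j<0$ whenever $z_i^j\le B_i/c_i^j$ is exactly your claim that $\tilde{p}_i^j>B_i/c_i^j$, and both rest on the same sign bookkeeping ($\aik\ge0$, $\boldzk^*\ge\boldzero$, $\boldq^*\ge\boldzero$, $\tilde{w}_i^j>\agtsumknoti\aik>0$). What your route buys is that it avoids constructing the contradicting test point and reuses the already-developed projection structure of the Nash equilibrium; what the paper's route buys is independence from the only-sketched equivalence $i\in\agte(\boldz^*)\Leftrightarrow\ci^\top\tildepi\ge B_i$ (your reliance on it is nevertheless legitimate, as the paper asserts it explicitly, and it follows from the KKT system~\eqref{eq:kkt}). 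You are also right to flag that the hypothesis~\eqref{eq:condn_for_agent_in_ve} presupposes $\agtsumknoti\aik>0$, which connectedness plus Assumption~\ref{asmp:no_antagonistic_relations} guarantees; the paper leaves this implicit.
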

	\begin{proof}
First note that the vector field $\boldf(\boldz)$ defined in~\eqref{eq:uncontrained_dynamics} can be written in terms of $\boldq^*$ as $\boldf(\boldz)= \boldJ(\boldz - \boldq^*)$. Hence $\boldfi(\boldz) = \boldDi(\boldzi - \boldq_{i}^*) - \agtsumknoti\aik \left[\boldq_{i}^* - \boldq_{k}^* + \boldzk - \boldzi \right]$. Also note that from~\eqref{eq:constraint_set}, for any $k \in \agt$ we have $0 \leq \zkj \leq B_k/c_k^j,\:\forall j \in \tp$. From Lemma~\ref{lem:non_negativity_of_unconst_opinions} and the condition given in~\eqref{eq:condn_for_agent_in_ve}, we can show that $f_{i}^{j}(\boldz)<0$. We are now ready to prove the claim via contradiction. Let $\boldz^*\in \eqpt$. From Lemmas~\ref{lem:reln_betn_var_ineq} and~\ref{lem:eqpt_exist}, we have that $\boldzi^*\in \solvarineqfistar$, \emph{i.e.,} $
		\langle \boldfi(\boldzi^{*},\boldznoti^*), \boldx_{i} - \boldzi^{*} \rangle \geq 0, \quad\forall \boldxi \in \constraintseti$. Let the condition in~\eqref{eq:condn_for_agent_in_ve} hold. Suppose $\ci^\top \boldzi^* <B_i$. Choose $s_i^j > (z_i^j)^*$ such that $\ci^\top \boldxi = B_i$, where $\boldxi = [(z_{i}^{1})^{*},\ldots,(z_{i}^{j-1})^{*},s_{i}^{j},(z_{i}^{j+1})^{*},\ldots,(z_{i}^{m})^{*}]^\top \in \constraintseti$, with all values being exactly the same as those in $\boldzi^*$ except the $j\tth$ value, which is equal to $s_i^j$. Clearly, $\boldxi \geq \boldzero$ because such an $s_i^j$ exists since $\ci^\top\boldzi^* < B_i$. Thus, $\boldxi \in \constraintseti$. Now, using this $\boldxi$, evaluating the above inner product gives $\langle \boldfi(\boldzi^{*},\boldznoti^*), \boldx_{i} - \boldzi^{*} \rangle = (s_i^j - (z_i^j)^*) f_i^j(\boldz^*) < 0$. This contradicts the fact that $\boldzi^*\in \solvarineqfistar$.
	\end{proof}
	
	We now interpret the sufficient condition in the above result. Recall from the proof of Lemma~\ref{lem:suff_cond_for_agent_exhausting} that for any agent $i \in \agt$, the upper bound on $\zij \leq B_i/c_i^j,\:\forall j \in \tp$ represents the \emph{maximum possible budget-constrained} opinion that $i$ can hold on topic $j$. The condition in~\eqref{eq:condn_for_agent_in_ve} states that if there exists a topic $j \in \tp$ for which the unconstrained equilibrium opinion $(q_{i}^j)^*$ is sufficiently larger than $B_i/c_i^j$, then the agent will exhaust its resource budget at equilibrium. 

\begin{figure}
	\centering
	\begin{tabular}{cc}
		\includegraphics[trim = 1.3in 3.2in 1.4in 2.8in, clip,width=0.45\linewidth]{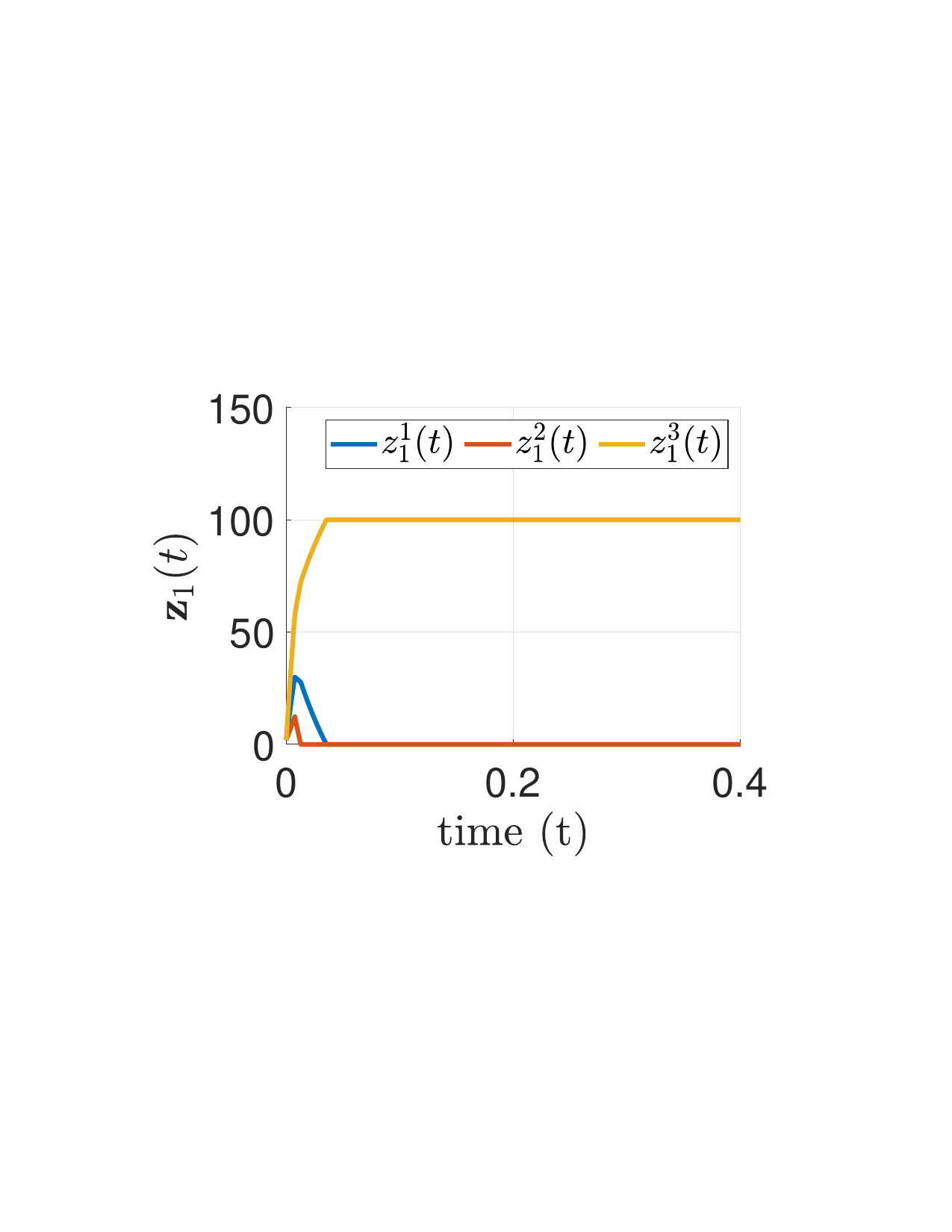} &
		\includegraphics[trim = 1.3in 3.2in 1.4in 2.8in, clip,width=0.45\linewidth]{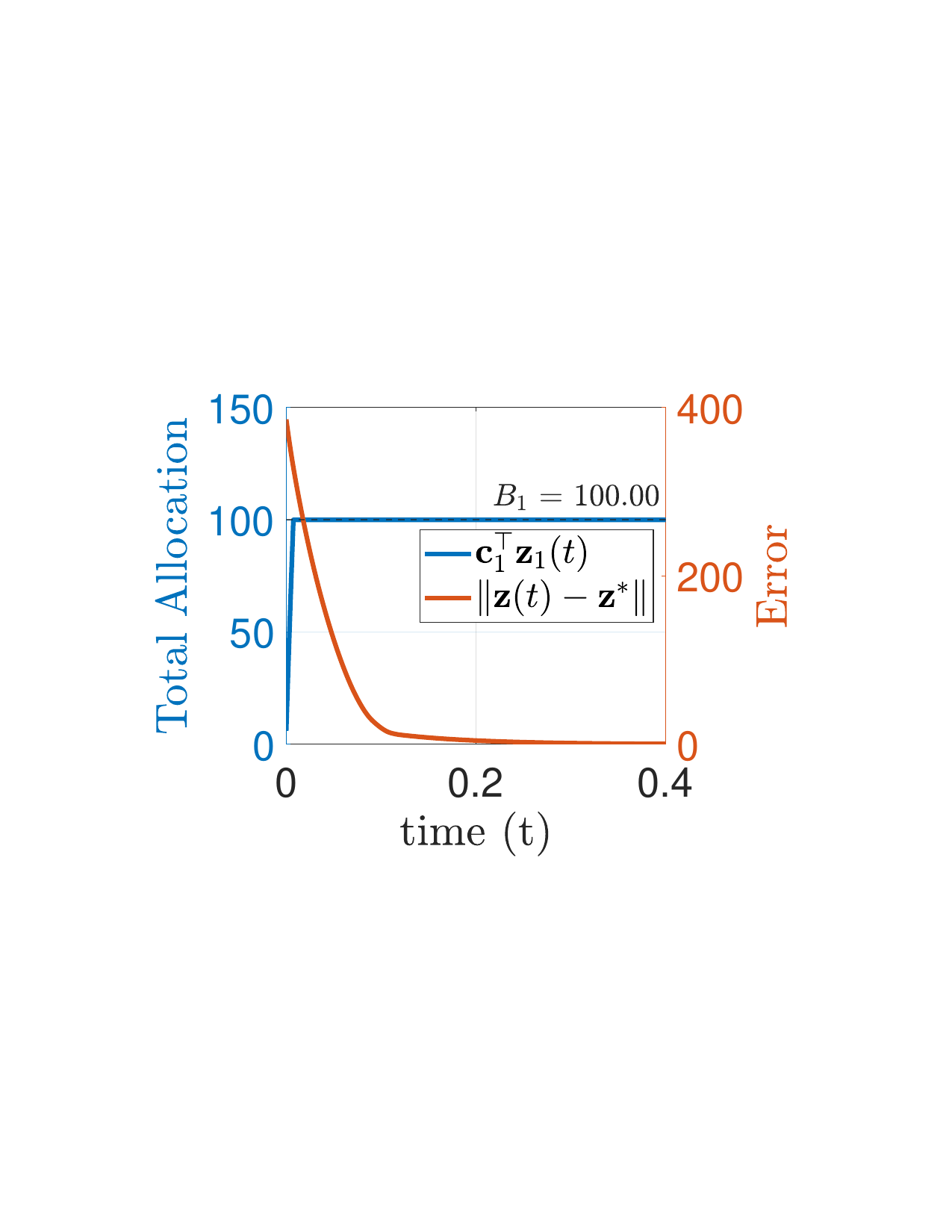}
	\end{tabular}
	\vspace{-2ex}
	\caption{\footnotesize Opinions of agent $1$, which belongs to $\agte(\mathbf{z}^*)$.  
		(Left) Opinions' convergence.  
		(Right) Evolution of total allocation $\mathbf{c}_1^\top \mathbf{z}_1(t)$ and total error $\|\mathbf{z}(t) - \mathbf{z}^*\|$.}
	\label{fig:agent_exhaust}
\end{figure}

	\section{Simulations}
	\label{sec:simulations} In this section, we present simulations to illustrate our analytical results\footnote{
	Values are rounded to two decimals. Simulations were run in \textsc{MATLAB} using \texttt{ode45} with event handling.
	}. 
	We consider a group of four agents forming opinions on three topics according to~\eqref{eq:projected_dynamics}. For all simulations, we let~$\boldci = [1,1,1]^\top,~\forall i \in \agt$ and~$a_{12}\!\approx\!1.24,~a_{13}\!\approx\!4.21,~a_{14}\!\approx\!1.57,~a_{23}\!\approx\!3.90,~a_{24}\!\approx\!5.14,~a_{34}\!\approx\!6.49$, and~$\aii\!=\!0,~\forall i \in \agt$. Thus, Assumption~\ref{asmp:no_antagonistic_relations} holds.

For the first set of simulations in Figure~\ref{fig:agent_exhaust}, the resource budget vector is $\mathbf{r} \approx [100, 183.38, 325.99, 356.12]^\top$. Figure~\ref{fig:agent_exhaust}~(Left) shows that agent $1$'s opinions in all topics converge to the corresponding values of the unique equilibrium point $\boldz^*\in \eqpt = \argmax_{\boldz \in \constraintset}W(\boldz)$. 
Figure~\ref{fig:agent_exhaust}~(Right) shows that the error $\|\boldz(t)-\boldz^*\|$ decreases monotonically to zero. 
This verifies the claims in Theorems~\ref{thm:inv_jacobian_properties},~\ref{thm:convg_to_uniq_eqm} and Corollary~\ref{cor:equality_of_eqm_sets_and_max_of_pot_fn}. 
The unconstrained equilibrium opinion vectors of all agents are:  
$\boldq_{1}^*\!\approx\![216.45, 203.35, 422.27]^\top$,  
$\boldq_{2}^*\!\approx\![146.76, 198.20, 207.26]^\top$,  
$\boldq_{3}^*\!\approx\![170.87, 293.39, 311.20]^\top$,  
$\boldq_{4}^*\!\approx\![160.35, 340.76, 305.09]^\top$.  
From these data, we can verify that $(q_1^3)^*\!\approx\!422.27$ is greater than the expression in square brackets in~\eqref{eq:condn_for_agent_in_ve}, which is approximately equal to 391.45.
Thus, the condition in~\eqref{eq:condn_for_agent_in_ve} is satisfied. Figure~\ref{fig:agent_exhaust}~(Right) also shows that total allocation $\mathbf{c}_1^\top\boldz_1(t)$ of agent $1$ converges to its budget value, $B_{1} =100$ and never exceeds it. Hence, $1 \in \agte(\boldz^*)$. This verifies Lemma~\ref{lem:suff_cond_for_agent_exhausting}. The equilibrium opinions of agent $2$ is $\boldz_{2}^*\!\approx\![0, 103.41, 79.97]^\top$ and its neighbor influenced preference vector is $\tilde{\mathbf{p}}_{2} \approx [25.15, 143.56, 117.04]$. Thus, $2 \in \agte(\boldz^*)$ because $\mathbf{c}_{2}^\top\boldz_{2}^* \approx 183.38$. Also note that, $2,3 \in \support{\boldz_2^*} \subseteq \tp$. The vector containing the values of $\tilde{w}_2^s$ defined in Lemma~\ref{lem:conditions_on_eqm_opinions_for_agent_to_be_in_ve_vde} is $\tilde{\mathbf{w}}_2\approx [14.48, 21.34, 22.89]^\top$. The vector containing the values of $|(z_{2}^s)^* - \tilde{p}_2^s|$ is $\mathbf{d}_2 \approx [25.15, 39.96, 37.25]^\top$. The above data verifies~\eqref{eq:agent_exhausts_budget} and its interpretation below Lemma~\ref{lem:conditions_on_eqm_opinions_for_agent_to_be_in_ve_vde}. 
For the second set of simulations in Figure~\ref{fig:agent_not_exhaust}, $\mathbf{r}\approx [100, 800, 325.99, 356.12]^\top$ is the resource budget vector containing $B_i$'s.  Figure~\ref{fig:agent_not_exhaust}~(Left) shows agent $2$'s opinions' evolution in all three topics. Figure~\ref{fig:agent_not_exhaust}~(Right) shows the evolution of total allocation $\mathbf{c}_{2}^\top\boldz_{2}(t)$. Here, we see that the agent does not exhaust its budget at equilibrium; \emph{i.e.,} $2 \in \agtde(\boldz^*)$. This is because $\mathbf{c}_{2}^\top \boldsymbol{\upsilon}_{2} \approx 653.8 < 800$. This verifies the sufficient condition given in Lemma~\ref{lem:suff_cond_for_agent_not_exhausting}.  The equilibrium opinion vector of second agent is $\boldz_{2}^* \approx [25.15, 143.56, 117.04]$ and its neighbor influenced preference vector is $\tilde{\mathbf{p}}_{2} \approx [25.15, 143.56, 117.04]$. We observe that $\boldz_{2}^*\approx \tilde{\mathbf{p}}_{2}$. This verifies the implication in~\eqref{eq:agent_do_not_exhaust_budget} of Lemma~\ref{lem:conditions_on_eqm_opinions_for_agent_to_be_in_ve_vde}.

\begin{figure}
	\centering
	\begin{tabular}{ll}
		\includegraphics[trim = 1.3in 3.2in 1.4in 2.8in, clip,width=0.45\linewidth]{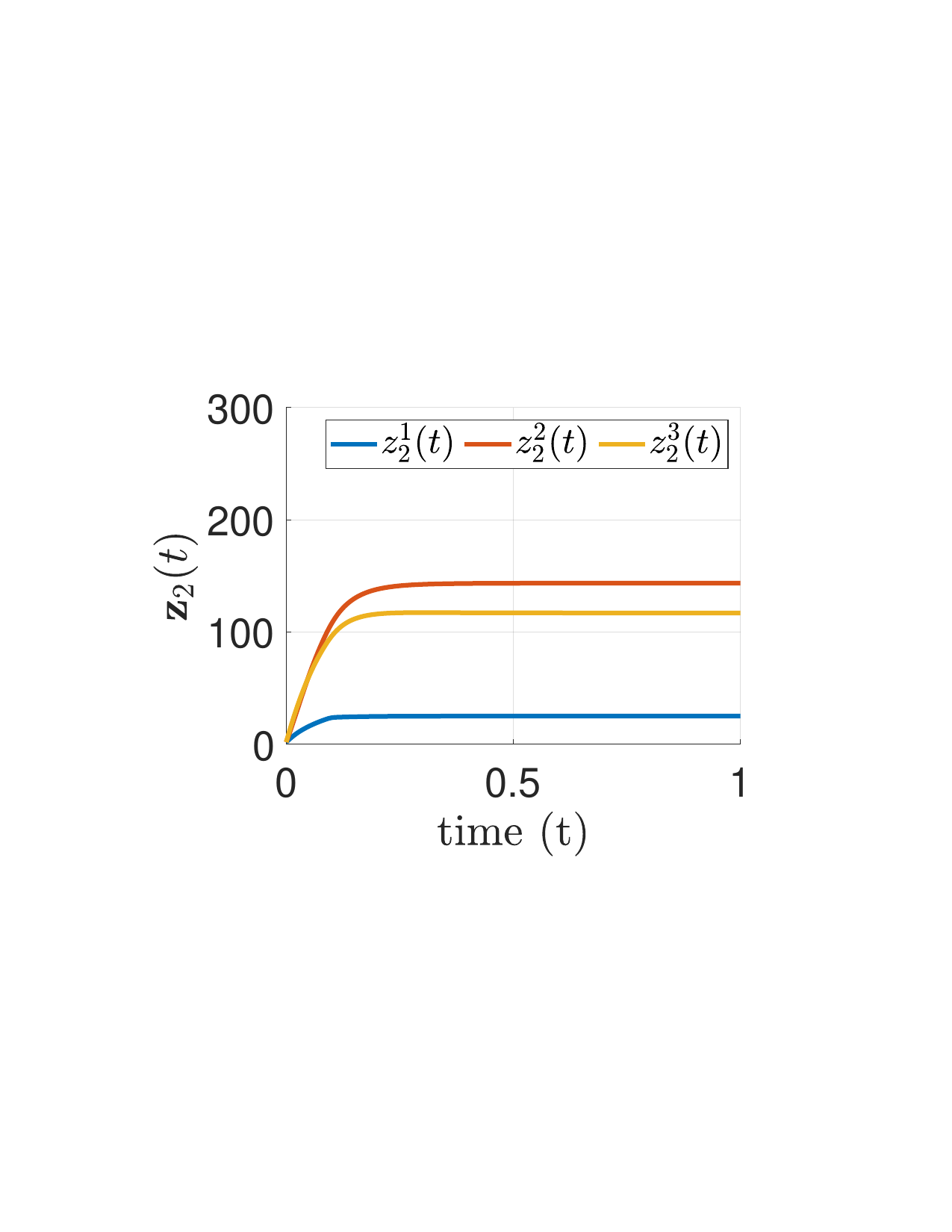} &
		\includegraphics[trim = 1.3in 3.2in 1.4in 2.8in, clip,width=0.45\linewidth]{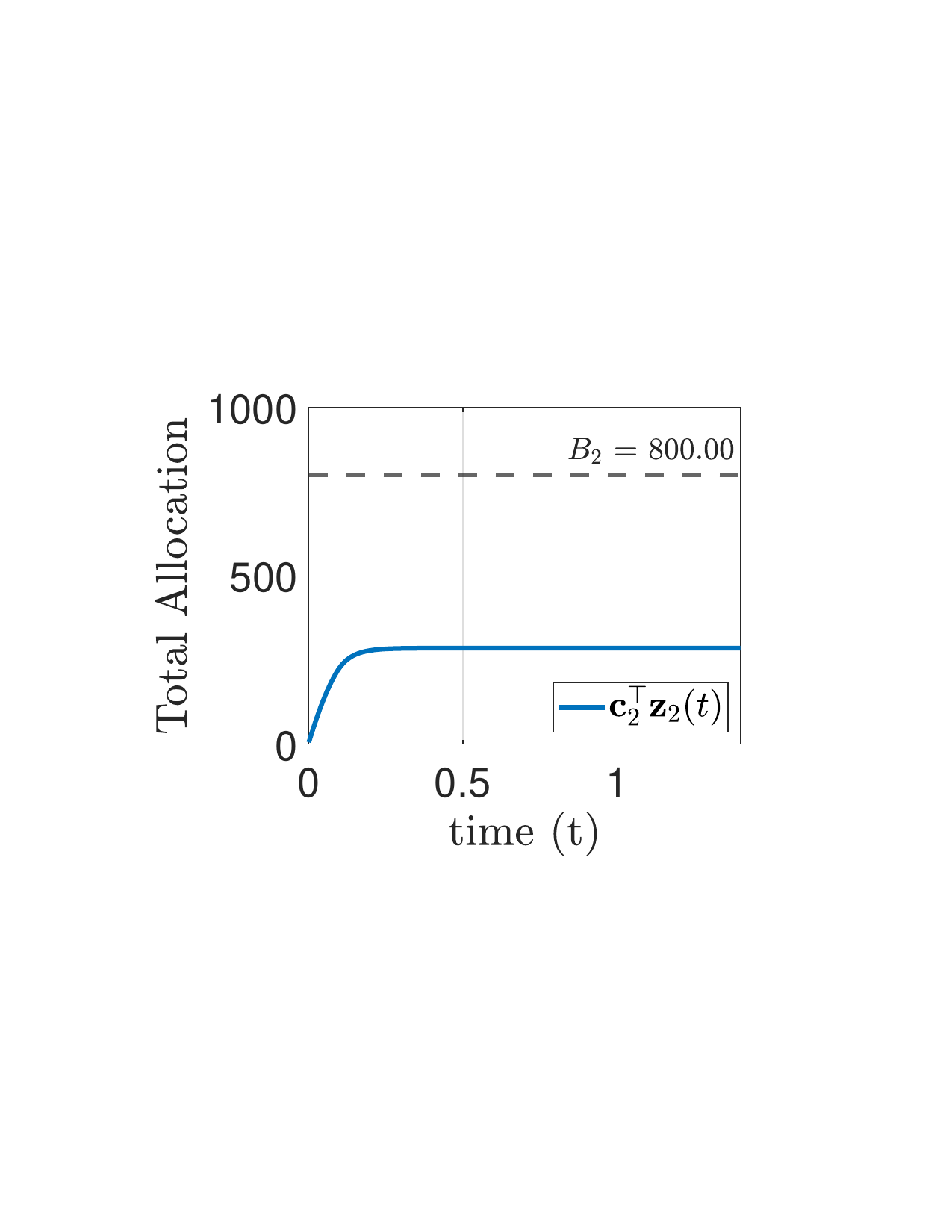}
	\end{tabular}
	\vspace{-2ex}
	\caption{\footnotesize Opinions of agent  $2 \in \agtde(\mathbf{z}^*)$. (Left) Convergence of opinions. (Right) Evolution of total allocation $\mathbf{c}_2^\top \boldz_{2}(t)$.}
	\label{fig:agent_not_exhaust}
\end{figure}

	\section{Conclusions}
	\label{sec:conclusions}
	We introduced a novel opinion dynamics model within the projected dynamical systems framework, where agents allocate limited resources across multiple topics. 
	We analyzed the model’s asymptotic properties, establishing convergence and stability of the equilibrium set.
	We analyzed the underlying opinion formation game, established that it is a potential game and characterized its Nash equilibria.	
	Further, we related the Nash equilibria to the equilibria of the projected dynamics. 
	 Finally, we studied cases where agents exhaust (or do not exhaust) their budget in the absence of antagonistic relations.
	  Potential future work directions include extension to directed signed social networks and exploration of more general utility functions.

	\bibliography{references}	
	\bibliographystyle{IEEEtran}

\end{document}